
\documentclass[journal]{IEEEtran} 
 


\normalsize

%
\hyphenation{op-tical net-works semi-conduc-tor}
 \usepackage{cite}
\usepackage[]{graphicx}
\graphicspath{{figures/}}
\DeclareGraphicsExtensions{.eps,.pdf}
\usepackage[mathscr]{euscript}
\usepackage{balance}

\usepackage{dsfont}
\usepackage{hyperref}
\usepackage[usenames,dvipsnames]{color}

\usepackage{graphicx}
\usepackage{subcaption} 
\usepackage{latexsym}
\usepackage{amssymb}
\usepackage{verbatim}
\usepackage{amsmath}
\usepackage{amsthm}
\usepackage{amsfonts}
\usepackage{stackengine}
\usepackage{bm} 
\usepackage{float}
\usepackage{stfloats}
\usepackage{mathtools}
\usepackage{dsfont}

\usepackage{multicol}

\newtheorem{Theorem}{Theorem}

\newtheorem{Definition}{Definition}
\newtheorem{cor}{Corollary}
\newtheorem{Remark}{Remark}

\newtheorem{Lemma}{Lemma}

\newtheorem{Example}{Example}

\IEEEoverridecommandlockouts

   \linespread{1}
\begin{document}
%
\title{Coded Computation Against Processing Delays for Virtualized Cloud-Based Channel Decoding }

\author{
	\IEEEauthorblockN{{Malihe~Aliasgari}\IEEEauthorrefmark{1},~\IEEEmembership{Member,~IEEE,} {J\"org Kliewer}\IEEEauthorrefmark{1},~\IEEEmembership{Senior Member,~IEEE,} and {Osvaldo Simeone}\IEEEauthorrefmark{2}}~\IEEEmembership{Fellow,~IEEE,}\\
	\thanks{This work was supported in part by U.S. NSF grants CNS-1526547, CNS-1815322, CCF-1525629, and by the	European Research Council (ERC) under the European Union Horizon 2020 research and innovative programme (grant agreement No 725731).
		
	
	Part of the material in this paper {\color{black} was presented at IEEE International Symposium on Information Theory (ISIT), 2018 \cite{aliasgari2017codedisit}}. 
}

\thanks{M. Aliasgari and J. Kliewer are with the Department of Electrical and Computer Engineering, New Jersey Institute of Technology, Newark, New Jersey 07102-1982 USA (email: ma839@njit.edu; jkliewer@njit.edu)}
\thanks{O. Simeone is with the Department of Informatics, King's College London, London, UK (email: osvaldo.simeone@kcl.ac.uk) }
}



\maketitle

\begin{abstract}
The uplink of a cloud radio access network architecture is studied in which decoding at the cloud takes place via network function virtualization on commercial off-the-shelf servers. In order to mitigate the impact of straggling decoders in this platform, a novel coding strategy is proposed, whereby the cloud re-encodes the received frames via a linear code before distributing them to the decoding processors. Transmission of a single frame is considered first, and upper bounds on the resulting frame unavailability probability as a function of the decoding latency are derived by assuming a binary symmetric channel for uplink communications. Then, the analysis is extended to account for random frame arrival times. In this case, the trade-off between average decoding latency and the frame error rate is studied for two different queuing policies, whereby the servers carry out per-frame decoding or continuous decoding, respectively.
 Numerical examples demonstrate that the bounds are useful tools for code design and that coding is instrumental in obtaining a desirable compromise between decoding latency and reliability.
\end{abstract}
\begin{IEEEkeywords}
Coded computation, network function virtualization, cloud radio access network, large deviation, queueing.
\end{IEEEkeywords}

 \IEEEpeerreviewmaketitle

\section{Introduction}
Promoted  by the European Telecommunications Standards Institute (ETSI), network function virtualization (NFV) has become a cornerstone of the envisaged architecture for 5G systems \cite{mijumbi2016network}. NFV leverages virtualization technologies in order to implement network functionalities on commercial off-the-shelf (COTS) programmable hardware, such as general purpose servers, potentially reducing both capital and operating costs. An important challenge in the deployment of NFV is ensuring carrier grade performance while relying on COTS components. Such components may be subject to temporary unavailability due to malfunctioning, and are generally characterized by randomness in their execution runtimes. The typical solution to these problems involves replicating the virtual machines that execute given network functions on multiple processors, e.g., cores or servers \cite{ETSI,liu2016reliability,herrera2016resource,kang2017trade}.

{\color{black}  Among the key applications of NFV is the implementation of centralized radio access functionalities in a cloud radio access network (C-RAN) \cite{nikaein2015processing,ETSINFVCRAN}. As shown in Fig.~\ref{fignfv}, each remote radio head (RRH) of a C-RAN architecture is connected to a cloud processor by means of a fronthaul (FH) link. Baseband functionalities are carried out on a distributed computing platform in the cloud, which can be conveniently programmed and reconfigured using NFV. 
The most expensive baseband function in terms of latency to be carried out at the cloud is uplink channel decoding \cite{nikaein2015processing,alyafawi2015critical,nikaein2014openairinterface}.

 The implementation of channel decoding in the cloud by means of NFV is faced not only with the challenge of providing reliable operation despite the unreliability of COTS servers, but also with the latency constraints imposed by retransmission protocols.
 In particular, keeping decoding latency at a minimum is a major challenge in the implementation of C-RAN owing to timing constraints from the link-layer retransmission protocols \cite{dotsch2013quantitative,rost2014opportunistic,khalili2017uplink}.
 In fact, positive or negative feedback signals need to be sent to the users within a strict deadline in order to ensure the proper operation of the protocol. 
 In \cite{Rodriguez17,rodriguez2018cloud} it is argued that exploiting parallelism across multiple cores in the cloud can reduce the decoding latency by enabling decoding as soon as one can has computed its task.  
   However, parallel processing does not address the  unreliability of COTS hardware. A different solution is needed in order to address both unreliability and delays associated with cloud decoding.


  }

The problem of straggling processors, that is, of processors
lagging behind in the execution of a certain orchestrated
function, has been well studied in the context of distributed computing
{\color{black} \cite{dean,ananthanarayanan2010reining,zaharia,li2016coded,li2015coded,li2016unified}}. Recently, it has been pointed out that, for the important case
of linear functions, it is possible to improve over repetition strategies in terms of the trade-off between performance and
latency by carrying out linear precoding of the data prior to processing, {\color{black} e.g.,  \cite{Ramchandran,Li,Yang,Tandon,Dutta,Sev17,yu2017polynomial,mallick2018rateless,kosaian2018learning}}. The key idea
is that, by employing suitable linear (erasure) block codes operating over fractions of size $1/K$ of the original data, a function may be completed as soon as any  $K$ or more processors, depending on the minimum distance of the code, have completed their operations. 
Coding has also been found to be useful {\color{black}addressing} the straggler problem in the context of coded distributed storage {\color{black} and computing} systems, {\color{black} see, e.g.,  \cite{wang2015using,joshi2017efficient,ananthanarayanan2013effective,yang2018coded,aktas2017effective}.}

{ \color{black} 
In this paper, we explore the use of coded computing {\color{black} to enable} reliable and timely channel decoding in a C-RAN architecture based on distributed unreliable processors. Specifically, we formally and systematically address the analysis of coded NFV for C-RAN uplink decoding.
The only prior work on coded computing for NFV is \cite{Ali}, which provides numerical results concerning a toy example with three processors in which a processor in the cloud is either on or off.
  Unlike \cite{Ali}, in this work, we derive analytical performance bounds for a general scenario with any number of servers, random computing runtimes, and random packet arrivals. Specific novel contributions are as follows.
%
%
%
\begin{itemize}
\item 
We first consider the transmission of an isolated frame, and develop analytical upper bounds on the frame unavailability probability (FUP) as a function of the allowed decoding delay. The FUP measures the probability that a frame is correctly decoded within a tolerated delay constraint.
The FUP bounds leverage large deviation results for correlated variables \cite{Janson} and depend on the properties of both the uplink linear channel code adopted at the user and the NFV linear code applied at the cloud; 
  
\item 
As a byproduct of the analysis we introduce the dependency graph of a linear code and its chromatic number as novel
relevant parameters of a linear code beside minimum distance, blocklength, and rate; 

\item 
We extend the analysis to account for random frame arrival times, and investigate the trade-off between average decoding latency and frame error rate (FER) for two different queuing policies, whereby the servers carry out either per-frame or continuous decoding;

\item We provide extensive numerical results that demonstrate the usefulness of the derived analytical bounds in both predicting the system performance and enabling the design of NFV codes.

\end{itemize}
}


The rest of the paper is organized as follows. In Section \ref{secModel}, we present the system model focusing, as in \cite{Ali}, on a binary symmetric channel (BSC) for uplink communications. Section \ref{secASY} presents the two proposed upper bounds on the FUP as a function of latency. In Section \ref{SecQueue} we study the proposed system with random frame arrival times, and Section \ref{secnum} provides numerical results. 
\vspace{-.5ex}

\begin{figure*}[t!]
	\begin{center}
		\includegraphics[scale=0.64]{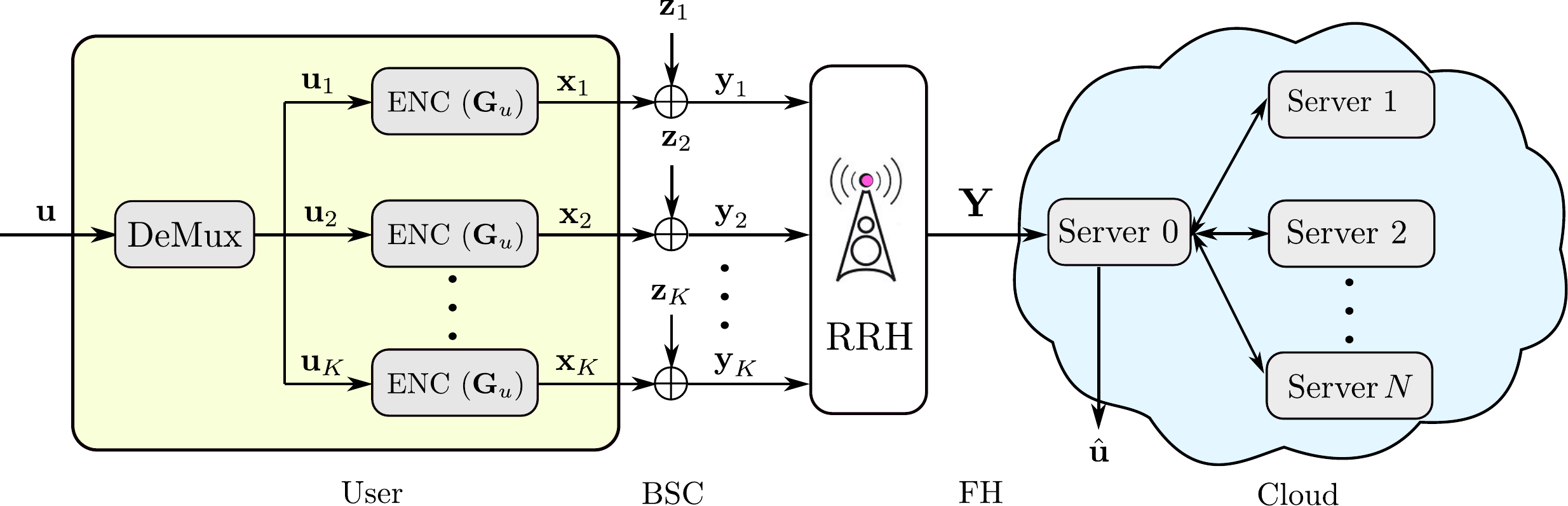}\vspace{-.1cm}~\caption{
			\footnotesize{NFV model for uplink channel
                  decoding. The input information frame $\textbf{u}$ is divided into
                  packets, which are encoded with a linear code
                  $\mathcal{C}_u$ with generator matrix
                  $\textbf{G}_u$. The packets are received by the RRH
                  through a BSC and forwarded to the cloud. Server 0 in the cloud
                  re-encodes the received packet with a linear code
                  $\mathcal{C}_c$ in order to enhance the
                  robustness against potentially straggling Servers
                  $1,\ldots,N$. 
            }}~\label{fignfv}
	\end{center}
\vspace{-6ex}
\end{figure*}
  
\section{System Model}\label{secModel}
As illustrated in Fig.~\ref{fignfv}, we consider the uplink of a C-RAN system  in which a user communicates with the cloud via a remote radio head (RRH). The user is connected to the RRH via a BSC with cross error probability $\delta$, while the RRH-to-cloud link, typically referred to as fronthaul, is assumed to be noiseless.
Note that the BSC is a simple model for the uplink channel, while the noiseless fronthaul accounts for a typical deployment with higher capacity fiber optic cables.
{\color{black} As we briefly discuss in Section \ref{secConclusion}, the analysis can be generalized to other additive noise channel, such as Gaussian channels. }
The cloud contains a master server, or Server 0, and $N$ slave servers, 
i.e., Servers $1,\ldots,N$. The slave servers are characterized by random computing delays as in related
works on  coded computation \cite{Ramchandran,Li,Sev17}. Note that we use here the term ``server" to refer to a decoding processor, although, in a practical implementation, this may correspond to a core of the cloud computing platform \cite{Rodriguez17,rodriguez2018cloud}.
\begin{figure}[t!]
	\begin{center}
		\includegraphics[scale=0.45]{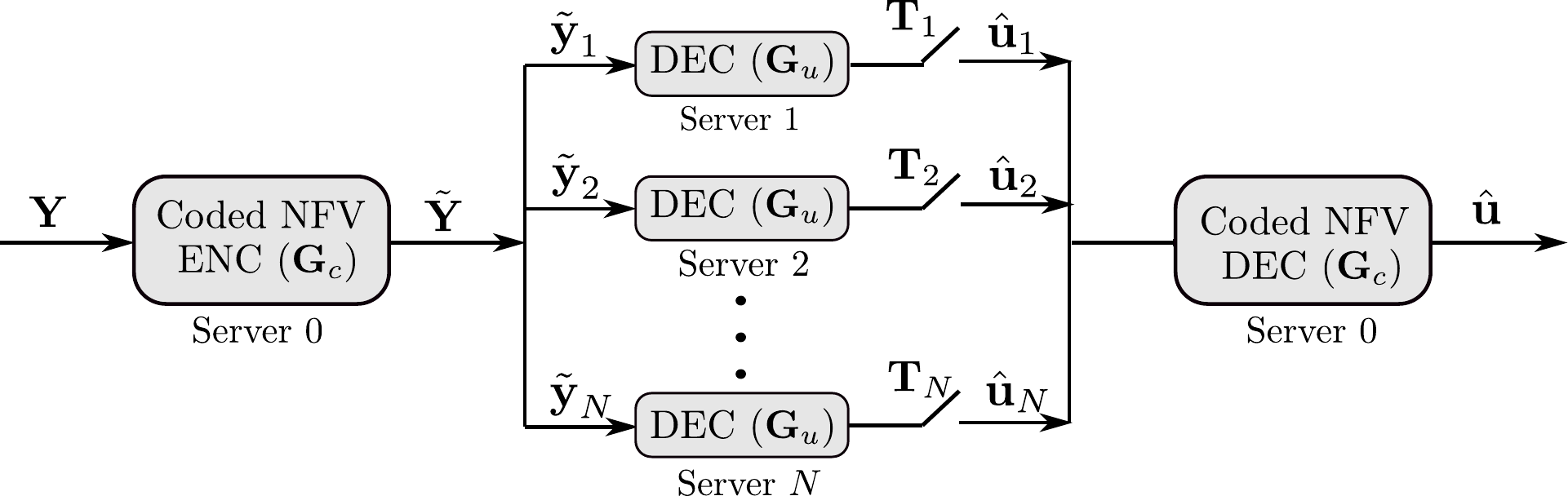}\vspace{-.1cm}~\caption{\footnotesize{
				Coded NFV at the cloud: Server 0 re-encodes the
				received packets in $\textbf{Y}$  by a linear NFV
				code $\mathcal{C}_c$ with generator  $\textbf{G}_c$.
				Each encoded packet $\tilde{\textbf{y}}_i$ is
				then conveyed to Server $i$  for decoding.
		}}~\label{figcodednfv}
	\end{center}
	\vspace{-5ex}
\end{figure}

In the first part of this paper, we consider transmission of a single information frame $\textbf{u}$, while Section \ref{SecQueue} focuses on random frame arrival times and queuing effect delays.
 The user encodes an information frame $\textbf{u}$ consisting of $L$ bits. Before encoding, the information frame is divided into $K$ blocks $\textbf{u}_1,\textbf{u}_2,\ldots , \textbf{u}_K \in \{0,1\}^{L/K}$  of equal size, each of them containing $L/K$ bits. As shown in Fig.~\ref{fignfv}, in order to combat noise on the BSC, the $L/K$ blocks are encoded by an $(n,k)$ binary linear code $\mathcal{C}_u$ of rate $r=k/n$ defined by generator matrix $\textbf{G}_u\in \mathbb{F}_2^{n\times k}$, where $n=L/(rK)$ and $k=L/K$.
Let $\textbf{x}_j\in \{0,1\}^n$ with $j\in\{1,\ldots,K\}$ be the $K$ transmitted packets of length $n$. 
At the output of the BSC, the length-$n$ received vector for the $j$th packet at the RRH is given as 
\begin{equation}\label{eqEncod}
	\textbf{y}_j=\textbf{x}_j\oplus \textbf{z}_j,
\end{equation}
 where $\textbf{z}_j$ is a vector of i.i.d. $\mathrm{Bern}(\delta$) random variables (rvs).
 The $K$ received packets $(\textbf{y}_1,\textbf{y}_2,\ldots,\textbf{y}_K)$ by the RRH are transmitted to the cloud via the fronthaul link, and the cloud performs decoding. Specifically, as detailed next, we assume that each Server $1,\ldots,N$ performs decoding of a single packet of length $n$ bits while Server 0 acts as coordinator.
 
Assuming $N\geq K$, we adopt the idea of NFV coding  proposed in \cite{Ali}. Accordingly,
as seen in Fig.~\ref{figcodednfv}, the $K$ packets are first linearly encoded by Server 0 into $N\geq K$ coded blocks of the same length $n$ bits,  each forwarded to a different server for decoding. 
This form of encoding is meant to mitigate the effect of straggling servers in a manner similar to \cite{Ramchandran,Li,Sev17}.
Using an $(N,K)$ binary linear NFV code $\mathcal{C}_c$ with $K\times N$ generator
matrix $\textbf{G}_c\in\mathbb{F}_2^{N\times K}$, the encoded packets are obtained as 
 \begin{equation}
 \tilde{\textbf{Y}}=\textbf{Y}\textbf{G}_c,
 \end{equation}
  where $\textbf{Y}=[\textbf{y}_1,\ldots,\textbf{y}_K]$  is the  $n\times K$ matrix obtained by including the received signal $\textbf{y}_j$ as the $j$th column and $\tilde{\textbf{Y}}=[\tilde{\textbf{y}}_1,\ldots, \tilde{\textbf{y}}_N]$ is
  the $n\times N$ matrix whose $i$th column $\tilde{\textbf{y}}_i$ is the input to Server $i$, where $i\in\{1,\ldots,N\}$.
 From (\ref{eqEncod}), this vector can be written as
\vspace{-.25cm}
 \begin{equation}\label{eqnoise}
 \tilde{\textbf y}_i=\sum_{j=1}^K \textbf{y}_j  g_{c,ji} =\sum _{j=1}^K \textbf{x}_j  g_{c,ji}+\sum_{j=1}^K\textbf{z}_j  g_{c,ji},
 \vspace{-.25cm}
 \end{equation}
   where $g_{c,ji}$ is the $(j,i)$th entry of matrix $\textbf{G}_{c}$.
   
 The signal part $\sum _{j=1}^K \textbf{x}_j g_{c,ji}$ in \eqref{eqnoise} is a linear combination of $d_i$ codewords for the rate-$r$ binary code with generator matrix $\textbf{G}_u$, and hence it is a codeword of the same code.
  The parameter  $d_i$, $i\in \{1,\ldots, N\}$, denotes the Hamming weight of the $i$th column  of matrix $\textbf{G}_{c}$, where $0\leq d_i\leq K$.
 Each server $i$ receives as input $\tilde{\textbf{y}}_i$ from which it can decode the codeword 
  $\sum_{i=1}^K\textbf{x}_ig_{c,ji}$. This decoding operation is affected by the noise  
   vector $\sum_{j=1}^K\textbf{z}_jg_{ji}$ in \eqref{eqnoise}, which has i.i.d. $\mathrm{Bern}(\gamma_i$) elements. Here, $\gamma_i$ is obtained as the first row and second column's entry of the matrix $\textbf{Q}^{d_i}$, with $\textbf{Q}$ being  
   {\color{black} the transition matrix of the BSC with cross over probability $\delta$, i.e.,
      \begin{equation}\label{generatorGc}
         {\textbf{Q}=
         \left[ {\begin{array}{cc}
         	 1-\delta & \delta\\
         	 \delta & 1-\delta
         \end{array} } \right].}
           \end{equation}
    As an example, $d_i=2$, implies a bit flipping probability of $\gamma_i=2\delta(1-\delta)$}. 
    Note that a larger value of $d_i$ yields a larger bit probability $\gamma_i$.
    We define as $\mathrm{P}_{n,k}(\gamma_i)$ the decoding error probability of {\color{black} the $(n,k)$ linear user code} at Server $i$, which can be upper bounded by using \cite[Theorem 33]{polyanskiy}.

  Server $i$ requires a random time $T_i=T_{1,i}+T_{2,i}$ to complete decoding, which is modeled as the sum of a component $T_{1,i}$ that is independent of the workload and a component $T_{2,i}$ that instead grows with the size $n$ of the packet processed at each server, respectively. The first component accounts, e.g., for processor unavailability periods, while the second models the execution runtime from the start of the computation. The first variable $T_{1,i}$ is assumed to have an exponential probability density function (pdf) $f_1(t)$ with mean $1/\mu_1$, while the variable $T_{2,i}$ has a shifted exponential distribution with cumulative distribution function (cdf) \cite{Amirhossein}
   \begin{equation}\label{cdfTime}
  F_2(t)=1-\exp{\left(-\frac{rK\mu_2}{L}\left(t-a\frac{L}{rK}\right)\right),}
  \end{equation}
  for $t\geq aL/(rK)$ and $F_2(t)=0$ otherwise. The parameter $a$ represents the minimum processing time per input bit, while $1/\mu_2$ is the average additional time needed to process one bit.
  {\color{black} 
  	As argued in \cite{Ramchandran,Amirhossein}, the shifted exponential model provides a good fit for the distribution of computation times over cloud computing environments such as Amazon EC2 clusters.
   }
   The cdf of the time $T_i$ can hence be written as the integral $F(t)=\int_{0}^{t}f_{1}(\tau)F_{2}(t-\tau)d\tau$.
 We also assume that the runtime rvs $\{T_i\}_{i=1}^N$ are mutually independent.
 Due to (\ref{cdfTime}), the probability that a given set of $l$ out of $N$ servers has finished decoding by time $t$ is given as
 \begin{equation}\label{eqOrderStatistic}
 a_l(t)={\color{black}{N \choose l}}F(t)^l(1-F(t))^{N-l}.
 \end{equation}
 
Let $d_{\min}$ be the minimum distance of the NFV code $\mathcal{C}_c$. Due to \eqref{eqnoise},  
Server 0 in the cloud is able to decode the message $\textbf {u}$ or equivalently the $K$ packets $\textbf {u}_j$ for $j\in\{1,\ldots,K\}$, as soon as $N-d_{\min}+1$ servers have decoded successfully. Let $\hat{\textbf{u}}_i$ be the output of the $i$th server in the cloud upon decoding. We assume that an error detection mechanism, such as a cyclic redundancy check (CRC), is in place so that Server 0 outputs
\[
\hat{\textbf{u}}_i = 
\begin{cases}
\hat{\textbf{u}}_i,& \text{for correct decoding},\\
\emptyset,              & \text{otherwise}.
\end{cases}
\]
The output $\hat{\textbf{u}}(t)$ of the decoder at Server 0 at time $t$ is then a function of the vectors $\hat{\textbf{u}}_i(t)$ for $i\in\{1,\ldots,N\}$, where 
\[
\hat{\textbf{u}}_i(t)= 
\begin{cases}
\hat{\textbf{u}}_i,& \text{if} ~T_i\leq t,\\
\emptyset,              & \text{otherwise}.
\end{cases}
\]
Finally, the frame unavailability probability (FUP) at time $t$ is defined as the probability
\begin{equation}\label{eqgoalerror}
\mathrm{P}_u(t)=\mathrm{Pr}\left[ \hat{\textbf{u}}(t)\neq\textbf{u}\right]. 
\end{equation} 
The event $\{\hat{\textbf{u}}(t)\neq\textbf{u}\}$ occurs when
either not enough servers have completed decoding or many servers have completed but failed decoding by time $t$. We also define the FER as 
\begin{equation}\label{FEReq}
\mathrm{P}_e = \lim _{t\rightarrow \infty} \mathrm{P}_u(t).
\end{equation}
The FER measures the probability that, when all servers have completed decoding, a sufficiently large number, namely larger than $N-d_{\min}$, has decoded successfully.
\section{Bounds on the Frame Unavailability Probability}\label{secASY}
In this section we derive analytical bounds on the FUP $\mathrm{P}_u(t)$ in \eqref{eqgoalerror} as a function of the decoding latency $t$.
\subsection{Preliminaries}
Each server $i$ with $i\in\{1,\ldots,N\}$ decodes successfully its assigned packet $\tilde{\textbf{y}}_i$ if: (\textit{i}) the server completes decoding by time $t$;
(\textit{ii}) the decoder at the server is able to correct the errors caused by the BSC. 
Furthermore as discussed, an error at Server 0 occurs at time $t$ if the number of servers that have successfully decoded by time $t$ is smaller than $N-d_{\min}+1$.

To evaluate the FUP, we hence define the indicator variables
$C_i(t)=\mathds{1}\{T_i\leq t\}$ and $D_i$
which are equal to 1 if the events (\textit{i}) and (\textit{ii}) described above occur, respectively, and zero otherwise. Based on these definitions, the FUP is equal to 
\begin{eqnarray}\label{eqFER}
\mathrm{P}_u(t)&=& \mathrm{Pr}\left[ \sum_{i=1}^N C_i(t) D_i\leq N-d_{\min}  \right].
\end{eqnarray}
The indicator variables $C_i(t)$ are independent Bernoulli rvs across the servers $i\in\{1,\ldots,N\}$, due to the independence assumption on the rvs $T_i$. However, the indicator variable $D_i$ are dependent Bernoulli rvs.
The dependence of the variables $D_i$ is caused by the fact that the noise terms 
$\sum_{i=1}^K\textbf{z}_jg_{c,ji}$ in \eqref{eqnoise} generally have common terms. 
In particular, if two columns $i$ and $j$ of the generator matrix $\textbf{G}_c$ have at least a 1 in the same row, then the decoding indicators $D_i$ and $D_j$ are correlated.
 This complicates the evaluation of bounds on the FUP \eqref{eqFER}.

\subsection{ Dependency Graph and Chromatic Number  of a Linear Code}\label{subsecdependency}
  To capture the correlation among the indicator variables $D_i$, we
  introduce here the notion of the  \emph{dependency graph} and its
  chromatic number for a linear code. These appear to be novel properties of a linear code, and we will argue below that they determine the performance of the NFV code $\mathcal{C}_c$ for the application at hand.
\begin{Definition}\label{dependencyGraph}
	Let $\textbf{G}\in \mathbb{F}_2^{K'\times N'}$ be a generator matrix of a linear code. The dependency graph $\mathcal{G}(\textbf{G})=(\mathcal{V},\mathcal{E})$ comprises a set $\mathcal{V}$ of $N'$ vertices and a set $\mathcal{E}\subseteq \mathcal{V}\times \mathcal{V}$ of edges, where edge $(i,j)\in \mathcal{E}$ is included if both the $i$th and $j$th columns of $\textbf{G}$ have at least a $1$ in the same row.
\end{Definition}

\begin{Example}\label{exGraph}
For an $(8,4)$ NFV code $\mathcal{C}_c$ with the following generator matrix  
   \begin{equation}\label{generatorGc}
   { 
    \textbf{G}_c=
   \left[ {\begin{array}{cccccccc}
   	1&0&0&0&0&1&1&0\\
   	0&0&0&1&1&0&0&1\\
   	0&1&0&0&0&0&1&1\\
   	1&0&1&0&1&0&0&0
   	\end{array} } \right],}
     \end{equation}
   the resulting dependency graph $\mathcal{G}(\textbf{G}_c)$ is shown in Fig.~\ref{fig:graph}.
\end{Example}
 \begin{figure}[t!]
 \centering
 \includegraphics[scale=0.42]{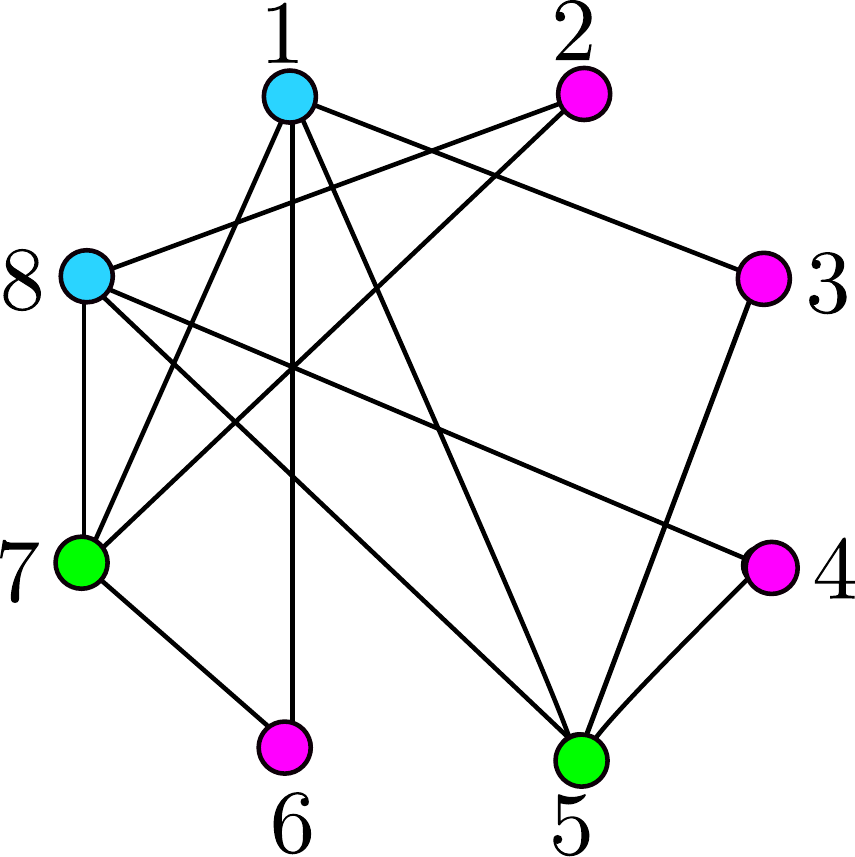} ~\caption{\footnotesize{Dependency graph associated with the (8,4) NFV code $\mathcal{C}_c$ in Example \ref{exGraph}.}}~\label{fig:graph}
 \vspace{-5ex}
 \end{figure}
	\begin{figure*}[!b]
	{\fontsize{10pt}{12pt} 
		\hrulefill	
		\begin{equation}\label{eqasymm}
		 		\begin{split}
		 		\mathrm{P}_u(t)\leq 
		 		\exp\left(-\frac{S(t)}{b^2(t)\mathcal{X}(\textbf{G}_c)}~\varphi\left(\frac{4b(t)\left(NF(t)-F(t)\sum_{i=1}^N\mathrm{P}_{n,k}(\gamma_i)-N+d_{\min}\right)}{5S(t)}\right) \right),
		 		\end{split}
		 		\end{equation}  }
\end{figure*}
  		
The chromatic number $\mathcal{X}(\textbf{G})$ of the graph $\mathcal{G}(\textbf{G})$ will play an important role in the analysis. We recall that the chromatic number is 
 the smallest number of colors needed to color the vertices of  $\mathcal{G}(\textbf{G})$, such that no two adjacent vertices share the same color (see the example in Fig.~\ref{fig:graph}).   
Generally, finding the chromatic number of a graph is NP-hard \cite{NP}. However, a simple upper bound on $\mathcal{X}(\textbf{G})$ is given as \cite{Brook}
\begin{equation}\label{Brook}
\mathcal{X}(\textbf{G})\leq \Delta(\textbf{G})+1,
\end{equation}
where $\Delta(\textbf{G})$ is the maximum degree of a graph $\mathcal{G}(\textbf{G})$.
 A consequence of \eqref{Brook} is the following.
 \begin{Lemma}
 	Let $\textbf{G} $ be a $K'\times N'$ matrix, where $\alpha_r$ and $\alpha_c$ are the maximum Hamming weights of the rows and columns in $\textbf{G}$, respectively. Then the chromatic number of the corresponding dependency graph $\mathcal{G}(\textbf{G})$ is upper bounded as 
 	\begin{equation}\label{eqChromatic}
 	\mathcal{X}(\textbf{G})\leq \min\{N,\alpha_c(\alpha_r-1)+1 \}.
	\end{equation}
\end{Lemma}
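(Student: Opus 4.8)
The plan is to bound the maximum degree $\Delta(\textbf{G})$ of the dependency graph and then invoke Brooks' inequality \eqref{Brook}. The first half of the minimum, the trivial bound $\mathcal{X}(\textbf{G})\le N$, follows immediately since the graph has $N$ vertices and any graph on $N$ vertices admits a proper coloring using a distinct color for each vertex. It therefore remains to establish $\mathcal{X}(\textbf{G})\le \alpha_c(\alpha_r-1)+1$, for which by \eqref{Brook} it suffices to show $\Delta(\textbf{G})\le \alpha_c(\alpha_r-1)$.

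First I would fix an arbitrary vertex $i$, i.e., the $i$th column of $\textbf{G}$, and count the vertices adjacent to it. By Definition \ref{dependencyGraph}, a vertex $j\ne i$ is adjacent to $i$ precisely when columns $i$ and $j$ carry a $1$ in some common row. Column $i$ has Hamming weight at most $\alpha_c$, so it has a $1$ in at most $\alpha_c$ rows. For each such row $\ell$, the number of columns having a $1$ in row $\ell$ is at most the maximum row weight $\alpha_r$, so at most $\alpha_r-1$ columns other than $i$ can contribute an edge to $i$ through row $\ell$.

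Next I would bound the degree of vertex $i$ by a union over its at most $\alpha_c$ nonzero rows: the neighborhood of $i$ is contained in the union, over these rows, of the sets of the at most $\alpha_r-1$ other columns sharing a $1$ in that row. This gives $\deg(i)\le \alpha_c(\alpha_r-1)$. Since $i$ was arbitrary, $\Delta(\textbf{G})\le \alpha_c(\alpha_r-1)$, and \eqref{Brook} yields $\mathcal{X}(\textbf{G})\le \alpha_c(\alpha_r-1)+1$. Intersecting this with the trivial bound $\mathcal{X}(\textbf{G})\le N$ delivers \eqref{eqChromatic}.

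The point requiring care, rather than any genuine obstacle, is the union in the degree count: a single column $j$ may share a $1$ with column $i$ in several distinct rows, so naively summing $\alpha_r-1$ over the at most $\alpha_c$ rows overcounts the neighbors of $i$. Because we only seek an upper bound, this overcounting is harmless; I would simply observe that the actual neighborhood is a subset of the counted union, so the inequality $\deg(i)\le \alpha_c(\alpha_r-1)$ persists. Consequently the argument is a short counting step followed by a direct application of \eqref{Brook}.
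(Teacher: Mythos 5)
Your proof is correct and follows essentially the same route as the paper: bound the maximum degree by $\Delta(\textbf{G})\leq \alpha_c(\alpha_r-1)$ via the definition of the dependency graph and then apply \eqref{Brook}, together with the trivial bound from the number of vertices. Your version merely spells out the row-by-row neighbor count (including the harmless overcounting remark) that the paper leaves implicit in its one-line argument.
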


\begin{proof}
According to Definition \ref{dependencyGraph}  we have the upper bound 	
$\Delta(\textbf{G})\leq \alpha_c (\alpha_r-1)$ and hence \eqref{eqChromatic} follows directly from \eqref{Brook}.
\end{proof}
 
\vspace{-.5ex}
\subsection{Large Deviation Upper Bound}\label{subseclargedevi}
	\begin{figure*}[!b]
	{\fontsize{10pt}{12pt} 
		\begin{equation}\label{uppereqTigther}
		\begin{split}
		\mathrm{P}_u(t)\leq 1- \frac{1}{{N \choose l}} \sum_{l=N-d_{\min}+1}^N   a_l(t) \sum_{\substack{ \mathcal{A}\subseteq       \{1,\ldots,N\}:  \\ |\mathcal{A}|=l}} \left(1-\exp   \left(-\frac{S_{\mathcal{A}}}{b_{\mathcal{A}}^2\mathcal{X}(\textbf{G}_{\mathcal{A}})}\varphi  \left(\frac{4b_{\mathcal{A}} \left (l-N+d_{\min}-\mathrm{P}_{n,k}^{\mathcal{A}} \right )}{5S_{\mathcal{A}}}\right)\right )  \right).
		\end{split}
		\end{equation}  }
\end{figure*}
In this subsection, we {\color{black} derive an upper bound} on the FUP. The bound is based on the large deviation result in \cite{Janson} for
the tail probabilities of rvs
 $X=\sum_{i=1}^M X_i$, where the rvs $X_{i}$ are generally dependent. {\color{black} We refer to this bound as the large deviation bound (LDB)}.
 The correlation of rvs $\{X_i\}$ is described in \cite{Janson} by a dependency graph. This is defined as any graph $\mathcal{G}(X)$ with $X_i$ as vertices, such that, if a vertex $i \in \{1,\ldots,M\}\backslash\{i\}$ is not connected to any vertex in a subset $\mathcal{J}\subset\{1,\ldots,M\}$, then $X_i$ is independent of $\{X_j\}_{j\in\mathcal{J} }$.
 
\begin{Lemma}[\!\cite{Janson}]\label{JansonLemmaTighter} 
	Let $X=\sum _{i=1}^M{X_i}$, where $X_{i}\sim \text{Bern} (p_i)$ and $p_i\in(0,1)$ are generally dependent.
	For any $b\geq 0$, such that the inequality
		$X_{i} -\mathbb{E}(X_{i})\geq -b$
	holds for all $i\in\{1,\ldots,M\}$ with probability one, and for any
	$\tau\geq 0$
	we have
	\begin{equation}\label{eqlem1}
	\mathrm{Pr}[X\leq \mathbb{E}(X)-\tau]\leq \exp\left(-\frac{S}{b^2\mathcal{X}(\mathcal{G}(X))}~\varphi{\left (\frac{4b\tau}{5S}\right )} \right),
	\end{equation}
	where
	$S\overset{\Delta}{=}\sum_{i=1}^N\text{Var}(X_i)$
	and
	$\varphi(x)\overset{\Delta}{=}(1+x)\ln(1+x)-x$.
	The same bound \eqref{eqlem1} holds for 
	$\mathrm{Pr}(X\geq \mathbb{E}(X)+\tau)$, where
	$X_i -\mathbb{E}(X_i)\leq b$
 with probability one.
\end{Lemma}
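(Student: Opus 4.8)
The plan is to prove the stated bound by reducing the dependent sum to a collection of \emph{independent} sums indexed by the color classes of the dependency graph, and then running a Bennett--Chernoff argument on each piece. I treat the lower tail $\mathrm{Pr}[X\leq \mathbb{E}(X)-\tau]$; the upper tail follows by the substitution $X_i\mapsto -X_i$. Write $\chi=\mathcal{X}(\mathcal{G}(X))$ for brevity and fix a proper coloring of $\mathcal{G}(X)$ with $\chi$ colors, which partitions $\{1,\ldots,M\}$ into independent sets $V_1,\ldots,V_\chi$. The defining property of the dependency graph makes the variables within each class mutually independent: for any $i\in V_k$, the vertex $i$ is non-adjacent to $V_k\setminus\{i\}$, so $X_i$ is independent of $\{X_j\}_{j\in V_k\setminus\{i\}}$, and iterating this yields full mutual independence inside $V_k$. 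Putting $Y_k=\sum_{i\in V_k}X_i$ gives the decomposition $X=\sum_{k=1}^{\chi}Y_k$ into $\chi$ internally independent sums.

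The core step is to control the moment generating function $\mathbb{E}[e^{-\lambda X}]$ for $\lambda\geq 0$. Since the $Y_k$ are themselves dependent, I would combine them through the generalized H\"older inequality with all exponents equal to $\chi$,
\[
\mathbb{E}\Big[\prod_{k=1}^{\chi} e^{-\lambda Y_k}\Big]\leq \prod_{k=1}^{\chi}\big(\mathbb{E}[e^{-\chi\lambda Y_k}]\big)^{1/\chi},
\]
which replaces the intractable joint expectation by a product of single-class expectations, at the cost of rescaling the exponent by $\chi$. Within each class the expectation factorizes as $\mathbb{E}[e^{-\chi\lambda Y_k}]=\prod_{i\in V_k}\mathbb{E}[e^{-\chi\lambda X_i}]$, and each factor is bounded by the Bennett cumulant estimate: for the centered variable $W_i=X_i-\mathbb{E}(X_i)$ obeying the one-sided bound $W_i\geq -b$, one has $\ln\mathbb{E}[e^{-\mu W_i}]\leq (\mathrm{Var}(X_i)/b^{2})(e^{\mu b}-1-\mu b)$ for $\mu\geq 0$. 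Taking $\mu=\chi\lambda$, summing over $i\in V_k$ and over the $\chi$ classes, and dividing by $\chi$ as dictated by the H\"older exponents, produces the clean estimate $\ln\mathbb{E}[e^{-\lambda(X-\mathbb{E}(X))}]\leq (S/(\chi b^{2}))(e^{\chi\lambda b}-1-\chi\lambda b)$, with $S=\sum_i \mathrm{Var}(X_i)$.

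It then remains to insert this into the Chernoff inequality $\mathrm{Pr}[X\leq \mathbb{E}(X)-\tau]\leq e^{-\lambda\tau}\,\mathbb{E}[e^{-\lambda(X-\mathbb{E}(X))}]$ and optimize over $\lambda\geq 0$. Substituting $u=\chi\lambda b$ turns the exponent into $(S/(\chi b^{2}))\big((e^{u}-1-u)-u\,b\tau/S\big)$; the minimizer is $u=\ln(1+b\tau/S)$, and the Legendre-type identity $\min_{u\geq 0}\{(e^{u}-1-u)-cu\}=-\varphi(c)$ with $\varphi(c)=(1+c)\ln(1+c)-c$ yields the exponent $-(S/(b^{2}\chi))\,\varphi(b\tau/S)$. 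This already has exactly the shape of \eqref{eqlem1}: the chromatic number sits in the denominator outside $\varphi$, while the deviation enters through the argument of $\varphi$.

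The main obstacle is recovering the \emph{precise} constant, i.e.\ the argument $4b\tau/(5S)$ rather than the $b\tau/S$ delivered by the idealized computation above. The gap reflects the fact that the reduction I sketched uses an exact proper coloring together with the sharpest Bennett estimate, whereas the statement, following \cite{Janson}, is obtained from a slightly more conservative moment-generating-function bound (which in Janson's general formulation also accommodates fractional covers of the dependency graph). Matching the factor $4/5$ therefore amounts to substituting the specific convexity estimate of \cite{Janson} for the Bennett cumulant bound in the middle step; the coloring--H\"older--Chernoff skeleton, which carries the conceptual content and explains the appearance of $\mathcal{X}(\mathcal{G}(X))$, is unchanged.
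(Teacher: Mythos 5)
The paper contains no proof of this lemma: it is imported verbatim from \cite{Janson} (Theorem 2.3 there, which is in fact stated in terms of fractional covers, i.e.\ with the fractional chromatic number $\chi^{*}(\mathcal{G})\leq \mathcal{X}(\mathcal{G})$, so the version quoted above is already a slight weakening). Your reconstruction---partition the vertices into $\chi=\mathcal{X}(\mathcal{G}(X))$ color classes, upgrade the defining property of the dependency graph to mutual independence within each class by induction, combine the classes via generalized H\"older with equal exponents $\chi$, bound each factor with the one-sided Bennett cumulant estimate, and optimize the Chernoff parameter---is precisely the skeleton of Janson's argument, with your equal-weight proper coloring playing the role of his fractional cover. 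Every step you invoke is valid: ``each $X_i$ independent of the rest of its class'' does imply joint factorization by induction; the H\"older application is legitimate since the exponents $\chi$ sum reciprocally to one; and the estimate $\ln\mathbb{E}\left[e^{-\mu(X_i-\mathbb{E}X_i)}\right]\leq\left(\mathrm{Var}(X_i)/b^2\right)\left(e^{\mu b}-1-\mu b\right)$ for $\mu\geq0$ is the classical Bennett lemma applied to $-(X_i-\mathbb{E}X_i)\leq b$.

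The one thing you got wrong is your closing assessment that recovering the constant $4/5$ is ``the main obstacle.'' It is no obstacle at all: $\varphi'(x)=\ln(1+x)\geq 0$, so $\varphi$ is nondecreasing on $[0,\infty)$, and since $4b\tau/(5S)\leq b\tau/S$ your optimized exponent dominates the stated one,
\[
\exp\left(-\frac{S}{b^2\chi}\,\varphi\left(\frac{b\tau}{S}\right)\right)\;\leq\;\exp\left(-\frac{S}{b^2\chi}\,\varphi\left(\frac{4b\tau}{5S}\right)\right).
\]
In other words, the clean coloring computation you carried out proves a \emph{strictly stronger} inequality than \eqref{eqlem1}, and the lemma as stated follows a fortiori; the factor $4/5$ in \cite{Janson} is a by-product of his more general bookkeeping, not a constant you need to reproduce. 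So your proof, far from being incomplete, establishes the statement (indeed a sharper form of it) once the two standard ingredients above---the within-class independence induction and the one-sided Bennett bound---are written out in full.
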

The following theorem uses Lemma \ref{JansonLemmaTighter} to derive a bound on the FUP.
\begin{Theorem}\label{ThmDevi}
Let $\mathrm{P}_{n,k}^{\min}=\min_i\{\mathrm{P}_{n,k}(\gamma_i)\}_{i=1}^N$.
	For all
	\begin{equation}\label{conditionTimemu10}
	t\geq F^{-1}\left ( \frac{N-d_{\min} }{N-\sum_{i=1}^N\mathrm{P}_{n,k}(\gamma_i)} \right ),
	\end{equation}
 the FUP is upper bounded by
 in {\color{black}\eqref{eqasymm}}, shown at the bottom of the page,	
	where
	$b(t)\overset{\Delta}{=} F(t)\left (1- \mathrm{P}_{n,k}^{\min}\right ) $
	and
	 $S(t)\overset{\Delta}{=} \sum_{i=1}^N F(t)\left (1-\mathrm{P}_{n,k}(\gamma_i)\right ) \left (1-F(t)(1-\mathrm{P}_{n,k}(\gamma_i)) \right )$.
The upper bound \eqref{eqasymm} on the FUP captures the dependency of the FUP on both the channel and the NFV code. In particular, the bound is an increasing function of the error probabilities $\mathrm{P}_{n,k}(\gamma_i)$, which depend on both codes. It also depends on the NFV code through parameters $d_{\min}$ and $\mathcal{X}(\textbf{G}_c)$.
\end{Theorem}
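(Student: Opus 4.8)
The plan is to cast the FUP \eqref{eqFER} directly into the form required by Lemma \ref{JansonLemmaTighter}. First I would set $X_i = C_i(t) D_i$ and $X = \sum_{i=1}^N X_i$, so that $\mathrm{P}_u(t) = \mathrm{Pr}[X \leq N - d_{\min}]$. Each $X_i$ is Bernoulli: since the completion indicator $C_i(t)$ is independent of the decoding-success indicator $D_i$ --- the runtime rvs $T_i$ being independent of the channel noise --- the product satisfies $X_i \sim \text{Bern}(p_i)$ with $p_i = F(t)\left(1 - \mathrm{P}_{n,k}(\gamma_i)\right)$. Summing gives $\mathbb{E}(X) = F(t)\bigl(N - \sum_{i=1}^N \mathrm{P}_{n,k}(\gamma_i)\bigr)$, which lets me rewrite the target event as $\mathrm{Pr}[X \leq \mathbb{E}(X) - \tau]$ with $\tau = \mathbb{E}(X) - (N - d_{\min})$. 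A short calculation then shows that the threshold condition \eqref{conditionTimemu10} is precisely equivalent to $\tau \geq 0$, which is the nonnegativity hypothesis of the lemma.

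Next I would identify the remaining parameters entering \eqref{eqlem1}. The variance term is immediate: $S(t) = \sum_{i=1}^N \text{Var}(X_i) = \sum_{i=1}^N p_i(1-p_i)$, reproducing the stated $S(t)$. For the one-sided bound $b$, since each $X_i \in \{0,1\}$ the deviation satisfies $X_i - \mathbb{E}(X_i) \geq -p_i$ with probability one, so the smallest admissible value is $b(t) = \max_i p_i = F(t)\left(1 - \mathrm{P}_{n,k}^{\min}\right)$, matching the definition in the statement. The only genuinely delicate parameter is the chromatic number of a valid dependency graph for $\{X_i\}$, and here I would argue that the graph $\mathcal{G}(\textbf{G}_c)$ of Definition \ref{dependencyGraph} may be used directly.

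The main obstacle --- and the one step requiring real care --- is verifying that $\mathcal{G}(\textbf{G}_c)$ is an admissible dependency graph for the \emph{products} $X_i$, and not merely for the decoding indicators $D_i$. The plan is to exploit the fact that the $C_i(t)$ are mutually independent and independent of all the $D_j$. If vertex $i$ is disconnected from a subset $\mathcal{J}$ in $\mathcal{G}(\textbf{G}_c)$, then by construction the $i$th column of $\textbf{G}_c$ shares no common row with the columns indexed by $\mathcal{J}$; hence the noise term $\sum_{j}\textbf{z}_j g_{c,ji}$ in \eqref{eqnoise} defining $D_i$ involves none of the $\textbf{z}_j$ appearing in the noise terms of $\{D_j\}_{j\in\mathcal{J}}$, so $D_i$ is independent of $\{D_j\}_{j\in\mathcal{J}}$. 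Multiplying by the independent factors $C_i(t)$ preserves this independence, yielding that $X_i$ is independent of $\{X_j\}_{j\in\mathcal{J}}$, which is exactly Janson's admissibility condition. Consequently $\mathcal{X}(\mathcal{G}(X)) = \mathcal{X}(\textbf{G}_c)$. With $S(t)$, $b(t)$, $\tau$, and $\mathcal{X}(\textbf{G}_c)$ all identified, substituting them into \eqref{eqlem1} reads off the claimed bound \eqref{eqasymm} and completes the proof.
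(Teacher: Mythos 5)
Your proposal is correct and follows essentially the same route as the paper's proof: define $X_i = C_i(t)D_i$, identify $\mathcal{G}(\textbf{G}_c)$ as a valid dependency graph for the $\{X_i\}$, and apply Lemma \ref{JansonLemmaTighter} with $\tau = \mathbb{E}(X) - N + d_{\min}$, the stated $b(t)$ and $S(t)$, with condition \eqref{conditionTimemu10} ensuring $\tau \geq 0$. In fact you spell out two steps the paper leaves implicit --- that multiplying the $D_i$ by the independent completion indicators $C_i(t)$ preserves the Janson admissibility of $\mathcal{G}(\textbf{G}_c)$, and that $b(t) = F(t)\left(1 - \mathrm{P}_{n,k}^{\min}\right)$ is the smallest value satisfying $X_i - \mathbb{E}(X_i) \geq -b$ for all $i$ --- so your write-up is, if anything, more complete.
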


\begin{proof}
	Let  $X_i(t)\overset{\Delta}{=}C_i(t)D_i$ and $X(t)=\sum_{i=1}^NX_i(t)$, where 
$X_i(t)$ are dependent Bernoulli rvs with probability $\mathbb{E}[X_i(t)]=\mathrm{Pr}[X_i(t)=1]=F(t)\left (1-\mathrm{P}_{n,k}(\gamma_i)\right )$.
It can be seen that a valid dependency  graph $\mathcal{G}(X)$ for the variables $\{X_i\}$ is the dependency graph
$\mathcal{G}(\textbf{G}_c)$ defined above. This is due to the fact that, as
discussed in Section~\ref{subseclargedevi}, the rvs~$X_i$ and $X_j$ are dependent if and only if the $i$th and $j$th  column of $\textbf{G}_c$ have at least a 1 in a common row.
 We can hence apply Lemma \ref{JansonLemmaTighter} for every time $t$ by selecting 
	$\tau=\mathbb{E}(X )-N+d_{\min} $,  and $b(t)$ as defined
        above. Note that this choice of
        $b(t)$ meets the
        constraint for $b$ in Lemma \ref{JansonLemmaTighter}.
        For $1/ \mu_1=0$, \eqref{conditionTimemu10} can be simplified as follows:
        \vspace{-.3cm}
                \begin{equation}\label{conditionTime}
                t\geq n\left (a-\frac{1}{\mu}\ln \left( \frac{d_{\min}-\sum_{i=1}^N\mathrm{P}_{n,k}(\gamma_i)}{N-\sum_{i=1}^N\mathrm{P}_{n,k}(\gamma_i)} \right ) \right ).\vspace{-.3cm}
                \end{equation}
 \end{proof}
\vspace{-1ex}
{\color{black}
\begin{Remark}
	When $t\rightarrow \infty$,  we have the limit $\lim_{t\rightarrow\infty} F(t)=1$, which implies that eventually all servers complete decoding. Letting $d^{\max}\overset{\Delta}{=}\max\{d_i\}_{i=1}^N$ and 
	$\gamma\overset{\Delta}{=}\textbf{Q}^{d^{\max}}(1,2)$,  the first row and second column's entry of the matrix $\textbf{Q}^{d^{\max}}$, the bound \eqref{eqasymm} reduces to
	\begin{align}\label{eqineq1}
	\!{\color{black}\lim_{t\rightarrow\infty}\!\mathrm{P}_u(t)}\!\leq\!\exp \!\Bigg(\!\frac{-N\mathrm{P}_{n,k}(\gamma)}{(1\!-\!\mathrm{P}_{n,k}(\!\gamma)\!)\mathcal{X}(\!\textbf{G}_c)\!}
		 \varphi\Bigg(\hspace{-1ex}\frac{4\!\left(\!d_{\min}/N\!-\!\mathrm{P}_{n,k}(\!\gamma)\!\right)}{5\mathrm{P}_{n,k}(\gamma)}\hspace{-1ex}\Bigg)\hspace{-1ex}\Bigg).
		\end{align}
	This expression demonstrates the dependence of the FUP bound \eqref{eqasymm} on the number of servers $N$, the decoding error probability  $\mathrm{P}_{n,k}(\gamma)$ for each server, the chromatic number $\mathcal{X}(\textbf{G}_c)$, and minimum distance $d_{\min}$ of the NFV code.
	In particular, it can be seen that the FUP upper bound \eqref{eqineq1} is a decreasing function of  $d_{\min}$, while it increases with the chromatic number, $\mathrm{P}_{n,k}(\gamma)$ and with $d^{\max}$. 
	
\end{Remark}
}
\vspace{-2ex}
\subsection{Union Bound}\label{subsecunion} 
 As indicated in Theorem \ref{ThmDevi},
 the large deviation based bound in \eqref{uppereqTigther} is only valid for large
 enough $t$, as can be observed from  \eqref{conditionTime}.
 Furthermore, it may generally not be tight, since it neglects the independence of the indicator variables $C_i$.
  In this subsection, a generally tighter but more complex {\color{black} union bound (UB)} is derived that is valid for all times $t$.
 \begin{Theorem}\label{ErrorBoundThm}
 	For any subset $\mathcal{A}\subseteq\{1,\ldots,N\}$, define 
  \begin{equation*}
   \mathrm{P}_{n,k}^{\min(\mathcal{A})}\overset{\Delta}{=}\min\{\mathrm{P}_{n,k}(\gamma_i)\}_{i\in\mathcal{A}}~~ \text{and}~~
  \mathrm{P}_{n,k}^{\mathcal{A}} \overset{\Delta}{=}\sum_{i\in \mathcal{A}} \mathrm{P}_{n,k}(\gamma_i),
  \end{equation*}
       and let
   $\textbf{G}_{\mathcal{A}}$ 
   be the
   $K\times |\mathcal{A}|$,
    submatrix of
    $\textbf{G}_c$,
     with column indices in the subset $\mathcal{A}$. 
  	Then, the FUP is upper bounded by 
  	\eqref{uppereqTigther}, shown at the bottom of the  page,
where $S_{\mathcal{A}}\triangleq\sum_{i\in \mathcal{A}} \mathrm{P}_{n,k}(\gamma_i) \left (1-\mathrm{P}_{n,k}(\gamma_i) \right )$ and $b_{\mathcal{A}}\overset{\Delta}{=}1-\mathrm{P}_{n,k}^{\min(\mathcal{A})}$.
\end{Theorem}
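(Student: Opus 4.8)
The plan is to condition on the set of servers that have completed decoding by time $t$, and then apply the large deviation bound of Lemma \ref{JansonLemmaTighter} to the conditional event on that fixed set. First I would rewrite the FUP \eqref{eqFER} by summing over the number $l$ of servers that have finished, i.e., for which $C_i(t)=1$. The event $\{\sum_i C_i(t)=l\}$ partitions into events indexed by the subset $\mathcal{A}\subseteq\{1,\ldots,N\}$ with $|\mathcal{A}|=l$ of completed servers, and by \eqref{eqOrderStatistic} the probability of exactly $l$ completions is $a_l(t)$, shared equally among the $\binom{N}{l}$ subsets of that size by symmetry of the completion times $T_i$. This explains the factor $a_l(t)/\binom{N}{l}$ in \eqref{uppereqTigther}: conditioned on $\mathcal{A}$ being exactly the set of finished servers, each such $\mathcal{A}$ contributes weight $a_l(t)/\binom{N}{l}$.

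The key step is the conditional analysis on a fixed subset $\mathcal{A}$. Given that precisely the servers in $\mathcal{A}$ have completed, a frame error is avoided only if at least $l-N+d_{\min}$ of them decode correctly, i.e., if $\sum_{i\in\mathcal{A}} D_i \geq l-N+d_{\min}$. I would therefore set $X_{\mathcal{A}}=\sum_{i\in\mathcal{A}} D_i$, a sum of dependent $\mathrm{Bern}(1-\mathrm{P}_{n,k}(\gamma_i))$ variables whose dependency graph is $\mathcal{G}(\textbf{G}_{\mathcal{A}})$, the dependency graph of the restricted generator matrix. Applying the upper-tail form of Lemma \ref{JansonLemmaTighter} to the complementary event $\{X_{\mathcal{A}} < l-N+d_{\min}\}$—equivalently a lower-tail statement with deviation $\tau_{\mathcal{A}}=\mathbb{E}(X_{\mathcal{A}})-(l-N+d_{\min})$ where $\mathbb{E}(X_{\mathcal{A}})=l-\mathrm{P}_{n,k}^{\mathcal{A}}$—yields exactly the exponential factor $\exp(-\frac{S_{\mathcal{A}}}{b_{\mathcal{A}}^2\mathcal{X}(\textbf{G}_{\mathcal{A}})}\varphi(\cdot))$ inside \eqref{uppereqTigther}, with the argument $\tau_{\mathcal{A}}=l-N+d_{\min}-\mathrm{P}_{n,k}^{\mathcal{A}}$ matching the numerator shown. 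The choice $b_{\mathcal{A}}=1-\mathrm{P}_{n,k}^{\min(\mathcal{A})}$ is dictated by the boundedness constraint $D_i-\mathbb{E}(D_i)\leq b_{\mathcal{A}}$, since $D_i\leq 1$ and $\mathbb{E}(D_i)=1-\mathrm{P}_{n,k}(\gamma_i)\geq 1-\mathrm{P}_{n,k}^{\min(\mathcal{A})}$, and $S_{\mathcal{A}}=\sum_{i\in\mathcal{A}}\mathrm{Var}(D_i)$ is the sum of the per-server variances.

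The quantity $\big(1-\exp(\cdots)\big)$ is then a lower bound on the conditional probability of \emph{successful} frame recovery given $\mathcal{A}$, so summing $a_l(t)\binom{N}{l}^{-1}\big(1-\exp(\cdots)\big)$ over all $\mathcal{A}$ of every size $l\geq N-d_{\min}+1$ lower-bounds the overall success probability; subtracting from $1$ gives the stated upper bound on $\mathrm{P}_u(t)$. I expect the main subtlety to be justifying the symmetry argument that distributes $a_l(t)$ uniformly over the $\binom{N}{l}$ subsets: one must verify that, because the completion times $\{T_i\}$ are i.i.d., the event that $\mathcal{A}$ is exactly the completed set has the same probability for every $\mathcal{A}$ of fixed cardinality, and that this event is independent of the decoding-success variables $\{D_i\}$ (which depend only on the channel noise, not the runtimes). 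A secondary technical point is confirming that the per-subset application of Lemma \ref{JansonLemmaTighter} is legitimate—i.e., that restricting to $\mathcal{A}$ preserves the dependency-graph property so that $\mathcal{G}(\textbf{G}_{\mathcal{A}})$ is a valid dependency graph for $\{D_i\}_{i\in\mathcal{A}}$—which follows directly from Definition \ref{dependencyGraph} applied to the submatrix $\textbf{G}_{\mathcal{A}}$.
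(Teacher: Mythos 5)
Your overall architecture coincides with the paper's proof: you decompose the success probability $1-\mathrm{P}_u(t)$ over the realized subset $\mathcal{A}$ of finished servers, with per-subset weight $a_l(t)/\binom{N}{l}$ (valid because the runtimes $\{T_i\}$ are i.i.d.\ and independent of the channel noise, hence of $\{D_i\}$), and then apply Lemma~\ref{JansonLemmaTighter} to the decoding indicators restricted to $\mathcal{A}$, using $\mathcal{G}(\mathbf{G}_{\mathcal{A}})$ as a valid dependency graph; this is exactly the paper's route to \eqref{eqthm} and \eqref{eqineq}. The only cosmetic difference is that you bound the lower tail of the success count $X_{\mathcal{A}}=\sum_{i\in\mathcal{A}}D_i$, whereas the paper bounds the upper tail of the failure count $I_{\mathcal{A}}=\sum_{i\in\mathcal{A}}(1-D_i)$; since $X_{\mathcal{A}}=l-I_{\mathcal{A}}$, these are equivalent formulations of the same step.

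There is, however, a concrete error in your key conditional step. Given that exactly the servers in $\mathcal{A}$ with $|\mathcal{A}|=l$ have finished, Server 0 succeeds if and only if at least $N-d_{\min}+1$ of them decode correctly (cf.\ \eqref{eqFER}), i.e., $X_{\mathcal{A}}\geq N-d_{\min}+1$, equivalently $I_{\mathcal{A}}< l-N+d_{\min}$. You instead state the success condition as $X_{\mathcal{A}}\geq l-N+d_{\min}$, transplanting the failure-count threshold onto the success count. This makes your deviation computation internally inconsistent: with your event $\{X_{\mathcal{A}}< l-N+d_{\min}\}$ one gets $\tau_{\mathcal{A}}=\mathbb{E}(X_{\mathcal{A}})-(l-N+d_{\min})=N-d_{\min}-\mathrm{P}_{n,k}^{\mathcal{A}}$, which is \emph{not} the quantity $l-N+d_{\min}-\mathrm{P}_{n,k}^{\mathcal{A}}$ you claim ``matches the numerator'' of \eqref{uppereqTigther} (the two agree only when $l=2(N-d_{\min})$). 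The repair is immediate: use the correct failure event $\{X_{\mathcal{A}}\leq N-d_{\min}\}$, for which $\tau_{\mathcal{A}}=\left(l-\mathrm{P}_{n,k}^{\mathcal{A}}\right)-(N-d_{\min})=l-N+d_{\min}-\mathrm{P}_{n,k}^{\mathcal{A}}$, matching the theorem; this is precisely the paper's choice $\tau_{\mathcal{A}}=l-N+d_{\min}-\mathbb{E}(I_{\mathcal{A}})$ in its upper-tail formulation. A secondary slip of the same kind: for the lower-tail version of Lemma~\ref{JansonLemmaTighter} applied to $X_{\mathcal{A}}$ the required condition is $D_i-\mathbb{E}(D_i)\geq -b_{\mathcal{A}}$, not $D_i-\mathbb{E}(D_i)\leq b_{\mathcal{A}}$, and it holds because $D_i-\mathbb{E}(D_i)\geq -\left(1-\mathrm{P}_{n,k}(\gamma_i)\right)\geq -\left(1-\mathrm{P}_{n,k}^{\min(\mathcal{A})}\right)$; your supporting inequality $\mathbb{E}(D_i)\geq 1-\mathrm{P}_{n,k}^{\min(\mathcal{A})}$ is reversed, since $\mathrm{P}_{n,k}^{\min(\mathcal{A})}\leq \mathrm{P}_{n,k}(\gamma_i)$ gives $\mathbb{E}(D_i)=1-\mathrm{P}_{n,k}(\gamma_i)\leq 1-\mathrm{P}_{n,k}^{\min(\mathcal{A})}$. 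With these two fixes your argument becomes a correct proof, essentially identical to the paper's (which avoids both pitfalls by working directly with the failure indicators $I_i=1-D_i$, for which the upper-tail condition $I_i-\mathbb{E}(I_i)\leq b_{\mathcal{A}}$ is immediate).
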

\begin{proof}
	Let $I_i=1-D_i$ be the indicator variable which equals 1 if Server $i$ fails decoding.
Accordingly, we have $I_i\sim \mathrm{Bern}(\mathrm{P}_{n,k}(\gamma_i))$.
	  For each subset $\mathcal{A}\subseteq \{1,\ldots ,N\}$, let 
	  $I_{\mathcal{A}}=\sum _{i\in \mathcal{A}} I_i$.
The complement of the FUP $\mathrm{P}_{s}(t)=1-\mathrm{P}_{u}(t)$ can hence be written as
\begin{align}	
\mathrm{P}_{s}(t)=& \mathrm{Pr}\left[ \sum_{i=1}^N C_i(t) D_i  > N-d_{\min}  \right]\\
 =&{\color{black}\frac{1}{{N \choose l}}\!\sum_{l=N-d_{\min}+1}^N  \!a_l(t) \hspace{-2ex}\sum_{\substack{ \mathcal{A}\subseteq \{1,\ldots,N\}:\\  |\mathcal{A}|=l}}}\nonumber\\
 	&{\color{black}\cdot \sum_{j=N-d_{\min}+1}^l{ \mathrm{Pr}\left[\substack{j~\text{servers from~}\mathcal{A}~\text{decode successfully}\\\text{ and}\\l-j~\text{servers from~}\mathcal{A}~\text{fail to decode }}\right]}}	\\
=& {\color{black}\!\frac{1}{{N \choose l}}}\sum_{l=\!N\!-\!d_{\min}\!+\!1}^N  \hspace{-2ex} a_l(t)\hspace{-2ex} \sum_{\substack{ \mathcal{A}\subseteq \{1,\ldots,N\}:  \\ |\mathcal{A}|=l}} \hspace{-3ex}\left(1\!-\!\mathrm{Pr}\!\left[I_\mathcal{A}\!\geq\! l\!-\!N\!+\!d_{min}\!\right]\right)\!.\label{eqthm}\vspace{-.3cm}
\end{align}	
We can now apply Lemma \ref{JansonLemmaTighter} to the probability in \eqref{eqthm} by noting that
$\mathcal{G}(\mathbf{G}_{\mathcal{A}})$ is a valid dependency graph for the variables $\{I_i\}$, $i\in \mathcal{A}$.
In particular, we apply Lemma \ref{JansonLemmaTighter} by setting
	 $\tau_{\mathcal{A}}= l-N+d_{\min}-\mathbb{E}(I_{\mathcal{A}}) $,
	  $b_{\mathcal{A}}\geq I_i-\mathbb{E}[I_i]$,
and
$S_\mathcal{A}=\sum_{i \in \mathcal{A}} \mathrm{Var}~ (I_i)$, leading to \vspace{-0.3cm}
   \begin{multline}\label{eqineq}
	\mathrm{Pr}\left[I_\mathcal{A}\geq l-N+d_{\min}\right]\leq\\
	\exp   \left(-\frac{S_{\mathcal{A}}}{b_{\mathcal{A}}^2\mathcal{X}(\textbf{G}_{\mathcal{A}})}\varphi  \left(\frac{4b_{\mathcal{A}} \left (l-N+d_{\min}-\mathrm{P}_{n,k}^{\mathcal{A}} \right )}{5S_{\mathcal{A}}}\right)\right ).
	\end{multline}
By substituting \eqref{eqineq} into \eqref{eqthm}, the proof is completed.
\end{proof}	
\vspace{-1.5ex}

\section{Random Arrivals and Queuing}\label{SecQueue}
In this section we extend our analysis from one to multiple frames transmitted by the users. To this end, we study the system illustrated in Fig.~\ref{fignfvqueue} with random frame arrival times and queueing at the servers. We specifically focus on the analysis of the trade-off between average latency and FER. 
\subsection{System Model}
\begin{figure*}[t!]
  	\begin{center}
  		\includegraphics[scale=0.64]{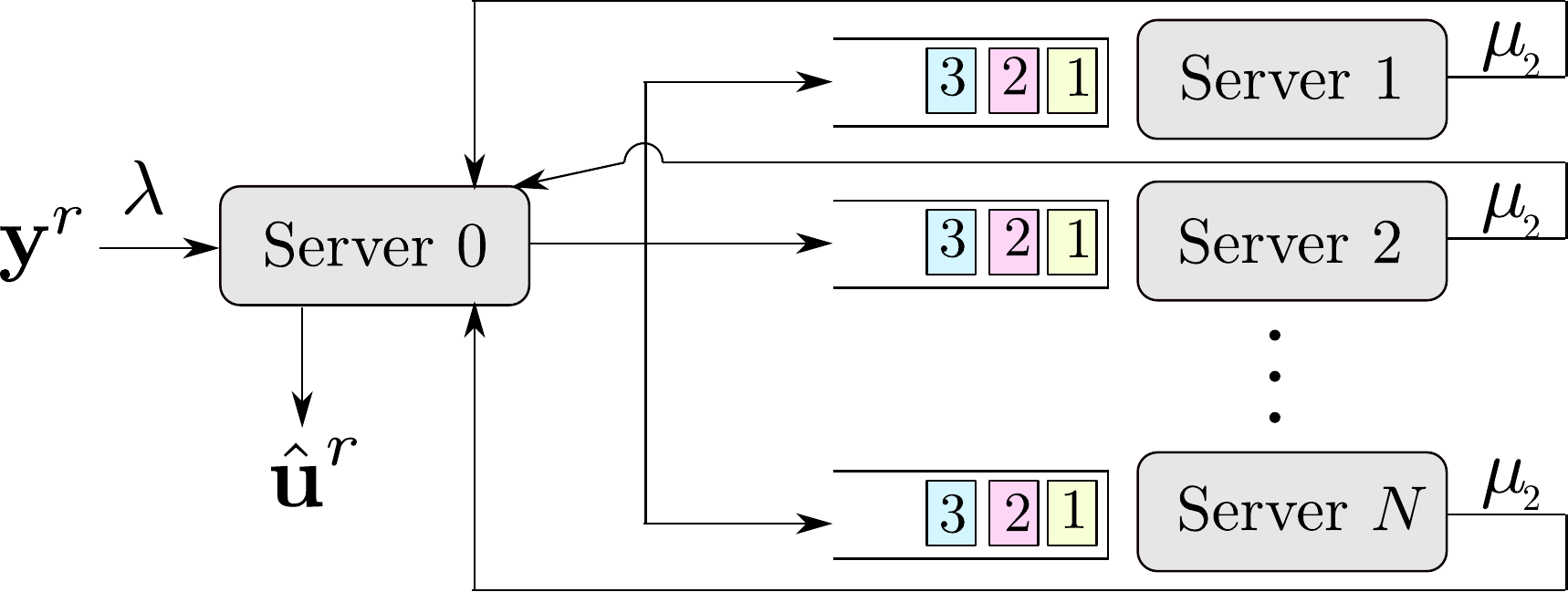}\vspace{-.1cm}~\caption{\footnotesize{In the model studied in Section \ref{SecQueue},
  				frames arrive at the receiver according to a Poisson process with parameter $\lambda$. Server 0 in the cloud encodes the received frames using an NFV code and forwards the encoded packets to servers $1,\ldots,N$ for decoding.
  		}}~\label{fignfvqueue}
  	\end{center}
  	\vspace{-5ex}
  \end{figure*}
  As illustrated in Fig.~\ref{fignfvqueue}, we assume that the arrival times of the received frames are random and distributed according to a Poisson process with a rate of $\lambda$ frames per second. Upon arrival, Server 0 applies an  NFV code to any received frame $\textbf{y}^r$ for $r=1,2,\ldots$, as described in Section II and sends each resulting coded packet $\tilde{\textbf{y}}_i^r$ to Server $i$, for $i=1,\ldots,N$. At Server $i$, each packet $\tilde{\textbf{y}}_i^r$ enters a first-come-first-serve queue. After arriving at the head of the queue, each packet $\tilde{\textbf{y}}_i^r$ requires a random time $T_i$ to be decoded by Server $i$. Here, we assume that $T_i$ is distributed according to an exponential distribution in \eqref{cdfTime} with {\color{black}  an average processing time of $1/\mu_2$ per bit.  Furthermore, the average time to process a frame of $n$ bits is denoted as $1/\mu$.}
   Also, the random variables $T_i$ are i.i.d. across servers.

 If the NFV code has minimum distance $d_{\min}$, as soon as $N-d_{\min}+1$ servers decode successfully 
 their respective packets derived from frame $\textbf{y}^r$, the information frame $\textbf{u}^r$ can be decoded at Server 0. We denote as $T$ the average overall latency for decoding frame $\textbf{u}^r$, which includes both queuing and processing.
 
Using \eqref{FEReq}, \eqref{eqFER} and the fact that all servers complete decoding almost surely as $t\rightarrow\infty$, that is $C_i(t)\rightarrow 1$ as $t\rightarrow\infty$, the FER 
 is equal to
\begin{align}\label{queueFERR}
\mathrm{P}_e =  \mathrm{Pr} \left[ \sum _{i=1}^N I_i \geq d_{\min} \right],
\end{align}
where $I_i$ is the indicator variable that equals $1$ if decoding at Server $i$ fails. This probability can be upper bounded by the following corollary of Theorem \ref{ThmDevi}.


\begin{cor}\label{propFERqueue}
 The FER defined in \eqref{queueFERR} is upper bounded by
\begin{equation}\label{eqqueueFER}
\mathrm{P}_e\leq  \exp\!\left(\!\frac{-S}{b^2\mathcal{X}(\textbf{G}_{\mathcal{C}})} \varphi\!\left(\!\frac{4b\! \left(\!d_{\min}\!-\!\sum _{i=1}^N\! \mathrm{P}_{n,k}(\gamma_i)\! \right)}{5S}\!\right)\!\right ),
\end{equation}
where $S\triangleq\sum_{i=1}^N \mathrm{P}_{n,k}(\gamma_i) \left (1-\mathrm{P}_{n,k}(\gamma_i) \right )$ and $b\overset{\Delta}{=}1-\mathrm{P}_{n,k}^{\min}$.

\begin{proof}
The result follows from Theorem \ref{ThmDevi} by selecting $\tau = d_{\min}- \sum _{i=1}^N \mathrm{P}_{n,k}(\gamma_i)$.
\end{proof}
\end{cor}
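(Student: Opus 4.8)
The plan is to recognize that the FER in \eqref{queueFERR} is, up to a change of variable, exactly the kind of large-deviation event already controlled by Lemma \ref{JansonLemmaTighter}, and to apply that lemma directly to the sum of the per-server failure indicators. First I would fix the random variable $X=\sum_{i=1}^N I_i$, where $I_i=1-D_i\sim\mathrm{Bern}(\mathrm{P}_{n,k}(\gamma_i))$ is the indicator that Server $i$ fails to decode. Since all servers complete almost surely as $t\to\infty$, i.e.\ $C_i(t)\to 1$, the FER is precisely $\mathrm{Pr}[X\geq d_{\min}]$, an \emph{upper}-tail event for $X$. Its mean is $\mathbb{E}(X)=\sum_{i=1}^N\mathrm{P}_{n,k}(\gamma_i)$, so writing $d_{\min}=\mathbb{E}(X)+\tau$ forces the choice $\tau=d_{\min}-\sum_{i=1}^N\mathrm{P}_{n,k}(\gamma_i)$ announced in the corollary; this $\tau$ is nonnegative exactly in the operating regime $\sum_i\mathrm{P}_{n,k}(\gamma_i)\leq d_{\min}$, which I would record as the hypothesis under which the bound is meaningful.

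Next I would verify that the three ingredients of Lemma \ref{JansonLemmaTighter} are correctly instantiated for the upper-tail version. The variance term is $S=\sum_{i=1}^N\mathrm{Var}(I_i)=\sum_{i=1}^N\mathrm{P}_{n,k}(\gamma_i)(1-\mathrm{P}_{n,k}(\gamma_i))$, matching the statement. For the upper tail the lemma requires a constant $b$ with $I_i-\mathbb{E}(I_i)\leq b$ almost surely for every $i$; since $I_i\in\{0,1\}$, the largest attainable deviation is $1-\mathrm{P}_{n,k}(\gamma_i)$, so the smallest admissible uniform bound is $b=\max_i(1-\mathrm{P}_{n,k}(\gamma_i))=1-\mathrm{P}_{n,k}^{\min}$, exactly as defined. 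Finally, I would argue that a valid dependency graph for $\{I_i\}$ is the graph $\mathcal{G}(\textbf{G}_c)$ of Definition \ref{dependencyGraph}: because $I_i=1-D_i$ is a deterministic function of $D_i$, the correlation structure of the $I_i$ is identical to that of the $D_i$, which—as established in the proof of Theorem \ref{ThmDevi}—is governed by whether columns $i$ and $j$ of $\textbf{G}_c$ share a $1$ in a common row; hence $\mathcal{X}(\mathcal{G}(X))=\mathcal{X}(\textbf{G}_c)$. Substituting these into the upper-tail inequality of Lemma \ref{JansonLemmaTighter} yields \eqref{eqqueueFER} verbatim.

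As a consistency check I would also note that the same estimate drops out of Theorem \ref{ThmDevi} by letting $t\to\infty$: since $F(t)\to 1$, one has $b(t)\to b$ and $S(t)\to S$, and the argument of $\varphi$ in \eqref{eqasymm} converges to $4b(d_{\min}-\sum_i\mathrm{P}_{n,k}(\gamma_i))/(5S)$, recovering \eqref{eqqueueFER}; the latency condition \eqref{conditionTimemu10} is automatically met for all sufficiently large $t$ whenever $\sum_i\mathrm{P}_{n,k}(\gamma_i)<d_{\min}$. I do not anticipate a substantive obstacle, since the corollary is a specialization rather than a genuinely new estimate. The only point demanding care is the bookkeeping for the \emph{upper} tail—confirming the correct sign convention for $b$ and that $\tau\geq 0$—because a careless reuse of the lower-tail constants from Theorem \ref{ThmDevi}, which controls $\sum_i C_i(t)D_i$ \emph{from below}, would attach $b$ to the wrong deviation and misidentify the admissible bound.
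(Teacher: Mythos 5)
Your proposal is correct and takes essentially the same route as the paper: the paper's one-line proof invokes Theorem~\ref{ThmDevi} with $\tau = d_{\min}-\sum_{i=1}^N \mathrm{P}_{n,k}(\gamma_i)$, which is exactly the $t\to\infty$ specialization you verify ($F(t)\to 1$, $b(t)\to b$, $S(t)\to S$), while your primary derivation---applying the upper-tail version of Lemma~\ref{JansonLemmaTighter} to $\sum_{i=1}^N I_i$ with $b=1-\mathrm{P}_{n,k}^{\min}$, $S=\sum_{i=1}^N \mathrm{Var}(I_i)$, and dependency graph $\mathcal{G}(\textbf{G}_c)$---is the identical computation (it coincides with inequality \eqref{eqineq} in the proof of Theorem~\ref{ErrorBoundThm} taken with $\mathcal{A}=\{1,\ldots,N\}$ and $l=N$, and by the symmetry $I_i=1-D_i$ the lower-tail and upper-tail instantiations produce the same constants). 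Your added bookkeeping---that $\tau\geq 0$ requires $\sum_{i=1}^N\mathrm{P}_{n,k}(\gamma_i)\leq d_{\min}$, and that condition \eqref{conditionTimemu10} is automatically met for large $t$ in that regime---is a correct refinement that the paper leaves implicit.
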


 We now discuss the computation of the average delay $T$ for different queueing management policies.

\begin{figure*}[t!]
	\centering
	\begin{subfigure}{.5\textwidth}
		\centering
		\includegraphics[scale=0.5]{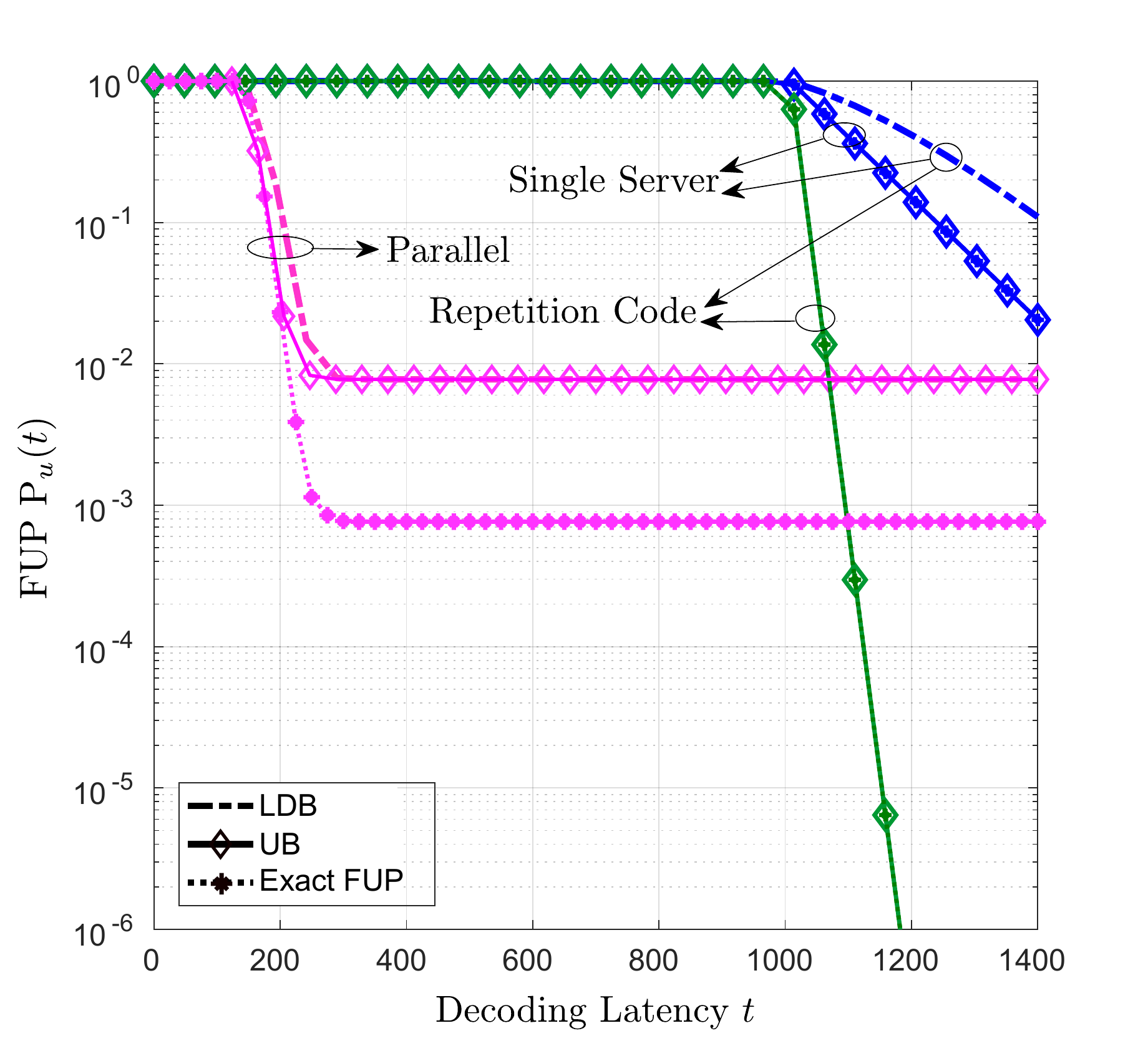}
		\caption{\footnotesize{Parallel, single server and repetition code.}}
		\label{fig:01}
	\end{subfigure}%
	\begin{subfigure}{.5\textwidth}
		\centering
		\includegraphics[scale=0.5]{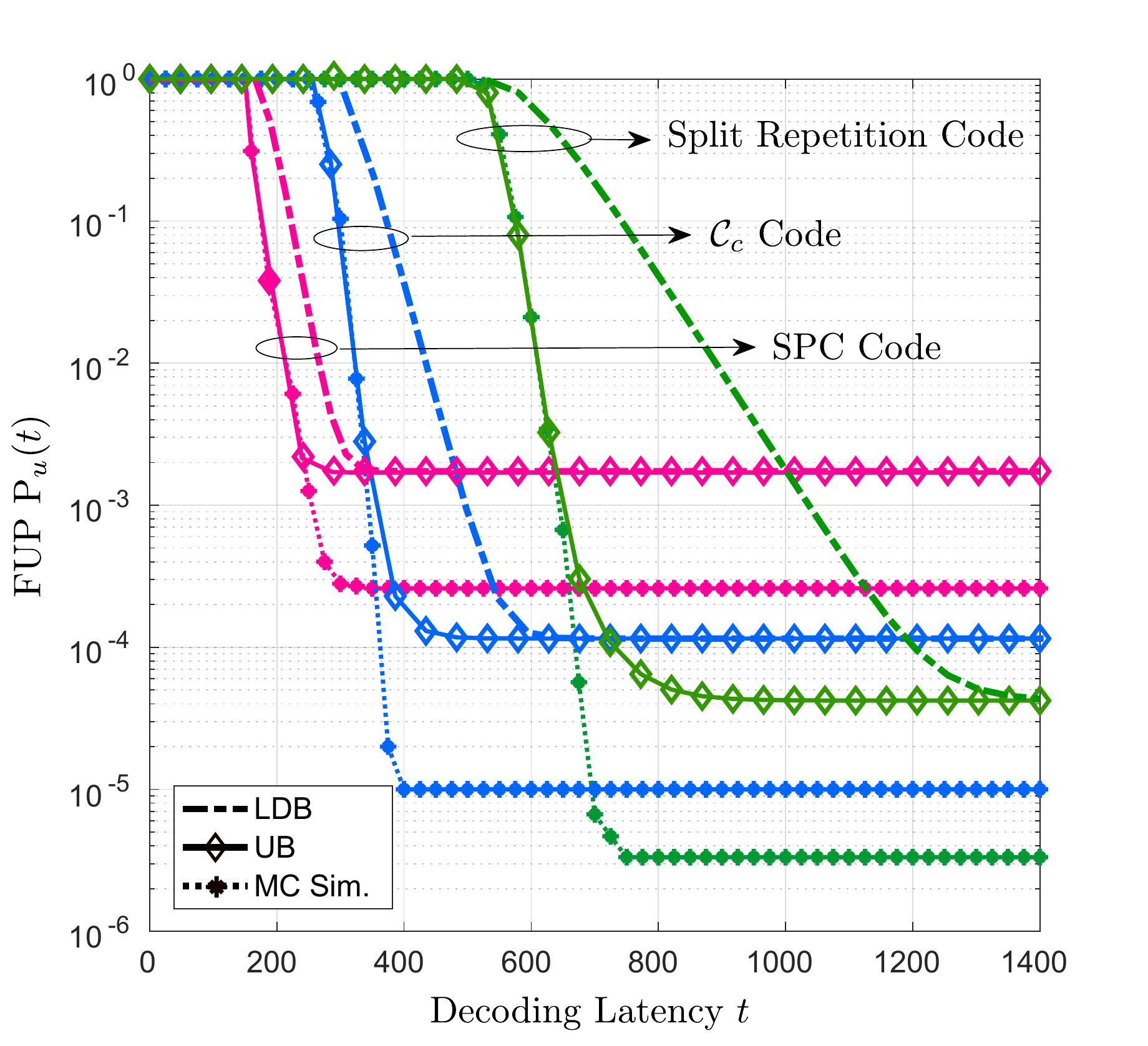}
		\caption{\footnotesize{Split repetition code, SPC code and $\mathcal{C}_c$ code.}}
		\label{fig:02}
	\end{subfigure}%
	\caption{\footnotesize{Decoding latency versus FUP for $L=504,N=8,1/\mu_1=0, \mu_2=10,a=1,\delta=0.01,r=0.5):$ (a) LDB, UB and Exact FUP for the {\color{black} parallel}, single-server, and repetition coding; (b) LDB, UB and Monte Carlo simulation (``MC Sim.'') results for split repetition code, SPC code, and the NFV code $\mathcal{C}_c$ defined in \eqref{generatorGc}.} }\label{fig:5ab}
\end{figure*}

\subsection{Per-Frame Decoding}
We first study the system under a queue management policy whereby only one frame $\textbf{y}^r$ is decoded at any time. Therefore, all servers wait until at least $N-d_{\min}+1$ servers have completed decoding of their respective packets $\tilde{\textbf{y}}^r_i$ before moving to the next frame $r+1$, if this is currently available in the queues. Furthermore, as soon as Server 0 decodes a frame, the corresponding packets still being present in the servers' queues are evicted.  

As a result, the overall system can be described 
an M/G/1 queue with arrival time $\lambda$ and service time distributed according to the $(N-d_{\min}+1)$th order statistic of the exponential distribution 
{\color{black}  \cite{Joshi}. }
The latter has the pdf \cite{ross2014introduction}
\eqref{pdfOrderStatistics}, shown at the bottom of the page,
 where $F_{T}(t)$  and $f_{T}(t)$ are the cdf and pdf of rv $T_i$, respectively. This queueing system was also studied in the context of distributed storage systems.

 Using the Pollaczek-Khinchin formula \cite{PZK}, the average delay of an M/G/1 queue can be obtained as 
 \eqref{ResponseTimeUB},  shown at the bottom of the page,
 where $H_N$ and $H_{N^2}$ are generalized harmonic numbers, defined by
$H_N=\sum_{i=1}^N\frac{1}{i}$ and $H_{N^2}=\sum_{i=1}^N\frac{1}{i^2}$ \cite{Joshi}.
Note that the queue is stable, and hence the average delay \eqref{ResponseTimeUB} is finite, if the inequality $n\lambda(H_N-H_{d_{\min}-1})<\mu (N-d_{\min}+1)$ holds. We refer to the described queue management scheme as per-frame decoding (pfd). {\color{black}{ This set-up is equivalent to the fork-join system studied in \cite{Joshi}. }}

\begin{figure*}[!b]
	{\fontsize{10pt}{12pt} 
		\hrulefill	
		\begin{equation}\label{pdfOrderStatistics}
		f_{T_{N-d_{\min}+1:N}}(t)=\frac{N!}{(N-d_{\min})!(d_{\min}-l)!}f_T(t)F_T(t)^{N-d_{\min}}(1-F_T(t))^{d_{\min}-1},
		\end{equation}
		\begin{equation}\label{ResponseTimeUB}
		T_{\text{pfd}} = \frac{n(H_N-H_{d_{\min}-1})}{(N-d_{\min}+1)\mu}+\frac{\lambda n^2 [ (H_N-H_{d_{\min}-1})^2+(H_{N^2}-H_{(d_{\min}-1)^2})]}{2 (N-d_{\min}+1)^2 \mu ^{2}  [ 1-\lambda n \mu ^{-1}(N-d_{\min}+1)^{-1} (H_N-H_{d_{\min}-1})]},
		\end{equation}}
\end{figure*}

\subsection{Continuous Decoding}
As an alternative queue management policy, as soon as any Server $i$ decodes its packet $\tilde{\textbf{y}}_i^r$, it starts decoding the next packet $\tilde{\textbf{y}}_i^{r+1}$ in its queue, if this is currently available. Furthermore, as above, as soon as Server 0 decodes a frame $\textbf{y}^r$, all corresponding packets $\tilde{\textbf{y}}^r_i$ still in the servers' queues are evicted. 
We refer this queue management policy as continuous decoding (cd).

The average delay \eqref{ResponseTimeUB} of per-frame decoding is an upper bound for the average delay of continuous decoding, i.e., we have $T_{\text{cd}} \leq T_{\text{pfd}}$ \cite{Joshi}. This is because, with per-frame decoding, all $N$ servers are blocked until $N-d_{\min}+1$ servers decode their designed packets. We evaluate the performance of continuous decoding using Monte Carlo methods in the next section.


\section{ Simulation Results}\label{secnum}

\vspace{-.5ex}
In this section we provide numerical results to provide additional insights into the performance trade-off for the system {\color{black}shown} in Fig.~\ref{fignfv}. We first consider individual frame transmission as studied in Section \ref{secModel} and Section \ref{secASY}, and then we study random arrivals as investigated in Section \ref{SecQueue}.
\subsection{Single Frame Transmission}\label{simSingle}
We first consider single frame transmission. The main  goals are to validate the usefulness of the two bounds
presented in Theorems 1 and 2 as design tools and to assess the importance of coding in obtaining desirable trade-offs between decoding latency and FUP.
We employ a frame length of $L=504$ and $N=8$ servers. The user code $\mathcal{C}_u$ is selected to be a randomly designed $(3,6)$ regular (Gallager-type) LDPC code with $r=0.5$, which is decoded via belief propagation.

We compare the performance of the following solutions: (\emph{i}) \textit{Standard single-server decoding}, whereby we assume, as a benchmark, the use of a single server, that is $N=1$, that decodes the entire frame ($K=1$); (\emph{ii}) \textit{Repetition coding}, whereby the entire frame ($K=1$) is replicated at all servers; (\emph{iii}) \textit{Parallel processing}, whereby the frame is divided into $K=N$ disjoint parts processed by different servers; 
{\color{black}(\emph{iv}) \textit{Split repetition coding}, whereby the frame is split into two parts, which are each replicated at $N/2$ servers. 
The code has hence $K=2$, $d_{\min}=N/2$, $\mathcal{X}(\textbf{G}_c)=N/2$, 
which can be thought of as an intermediate choice between repetition coding and the {\color{black} parallel} scheme;}
 (\emph{v}) \textit{Single parity check code (SPC)}, {\color{black} with $N=K+1$, whereby, in addition to the servers used by parallel decoding, an additional server decodes the} binary sum of all other $K$ received packets; and (\emph{vi}) an NFV code $\mathcal{C}_c$ with the generator matrix $\textbf{G}_c$ defined in \eqref{generatorGc},  which is characterized by $K=4$.
Note that, with both single-server decoding and repetition coding, we have a blocklength of $n=1008$ for the channel code. Single-server decoding is trivially characterized by $\mathcal{X}(\mathbf{G}_c)=d_{\min}=1$, while repetition coding is such that the equalities
$\mathcal{X}(\mathbf{G}_c)=d_{\min}=8$ hold. Furthermore, the {\color{black} parallel} approach is characterized by $n=126$, $d_\text{min}=1$ and $\mathcal{X}(\textbf{G}_c)=1$;
{\color{black} the split repetition code is characterized by $n=504 ,d_{\min}=4$ and $\mathcal{X}(\textbf{G}_c)=4$;  }
 the SPC code has $n=144, d_{\min}=2$ and $\mathcal{X}(\textbf{G}_c)=2$; and the NFV code $\mathcal{C}_c$ has $n=252$, $d_\text{min}=3$ and $\mathcal{X}(\textbf{G}_c)=3$. The exact FUP for a given function $\mathrm{P}_{n,k}(\cdot)$ can easily be computed for cases (\emph{i})-(\emph{iii}).
In particular, for single server decoding, the FUP equals
		\begin{equation}\label{ExactSingle}
		\mathrm{P}_u(t)=1-a_1(t)(1-\mathrm{P}_{{L/r},L}(\delta));
		\end{equation} 
		for the repetition code, the FUP is
		\begin{equation}\label{ExactRepetition}
		\mathrm{P}_u(t)=1-\sum_{i=1}^Na_i(t)(1-\mathrm{P}_{L/r,L}(\delta));
		\end{equation} 
		and for the {\color{black} parallel} approach, we have
		\begin{equation}\label{ExactUncoded}
		\mathrm{P}_u(t)=1-a_N(t)(1-\mathrm{P}_{L/(rN),L/N}(\delta))^N.
		\end{equation} 
In contrast, the exact FUPs for the SPC and code $\mathcal{C}_c$ are difficult to compute, due to the discussed correlation among the decoding outcomes at the servers. 

\begin{figure*}[t!]
	\centering
	\begin{subfigure}{.5\textwidth}
		\centering
		\includegraphics[scale=0.5]{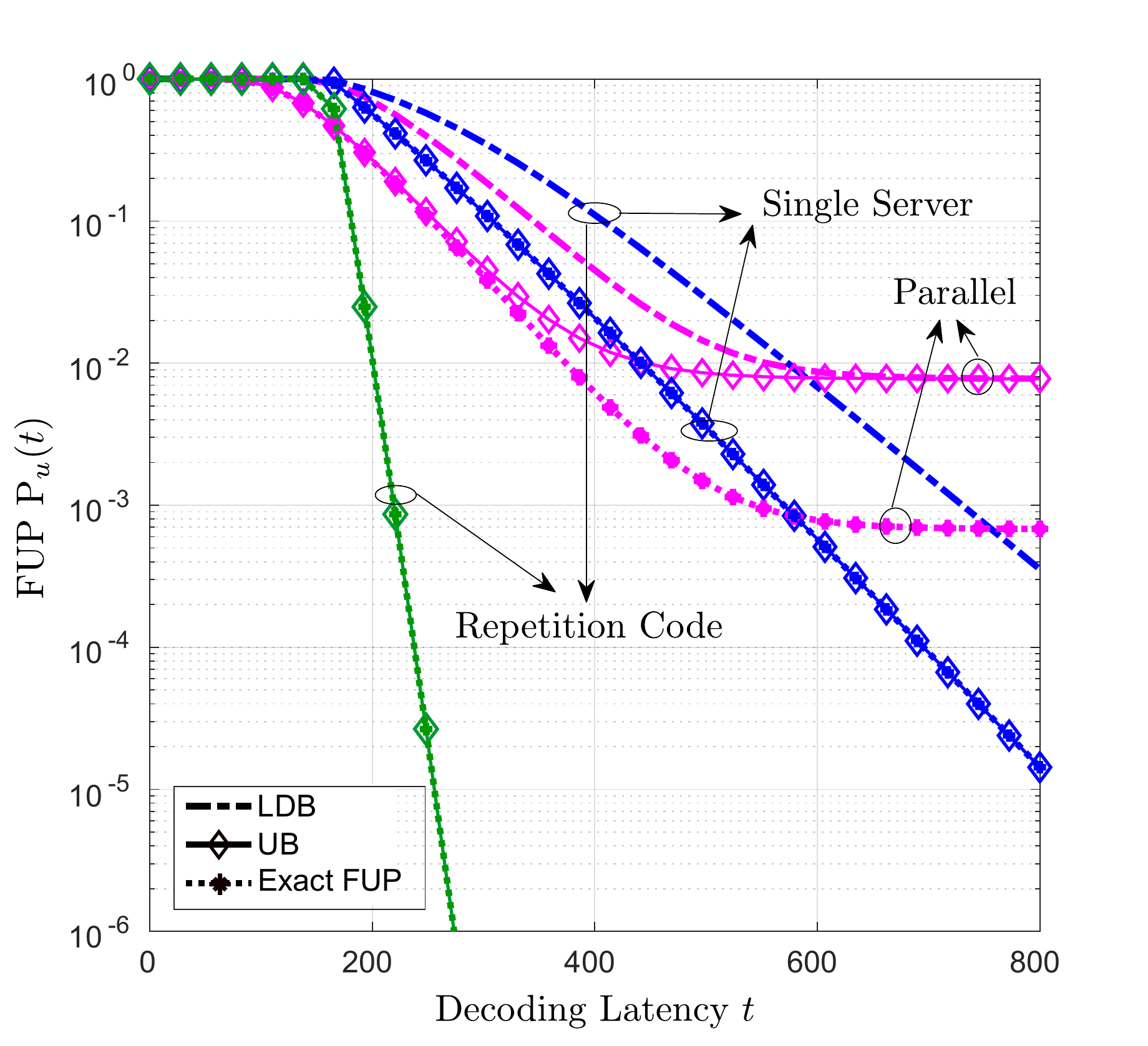}
		\caption{\footnotesize{Parallel, single server and repetition code.}}
		\label{fig:03}
	\end{subfigure}%
	\begin{subfigure}{.5\textwidth}
		\centering
		\includegraphics[scale=0.5]{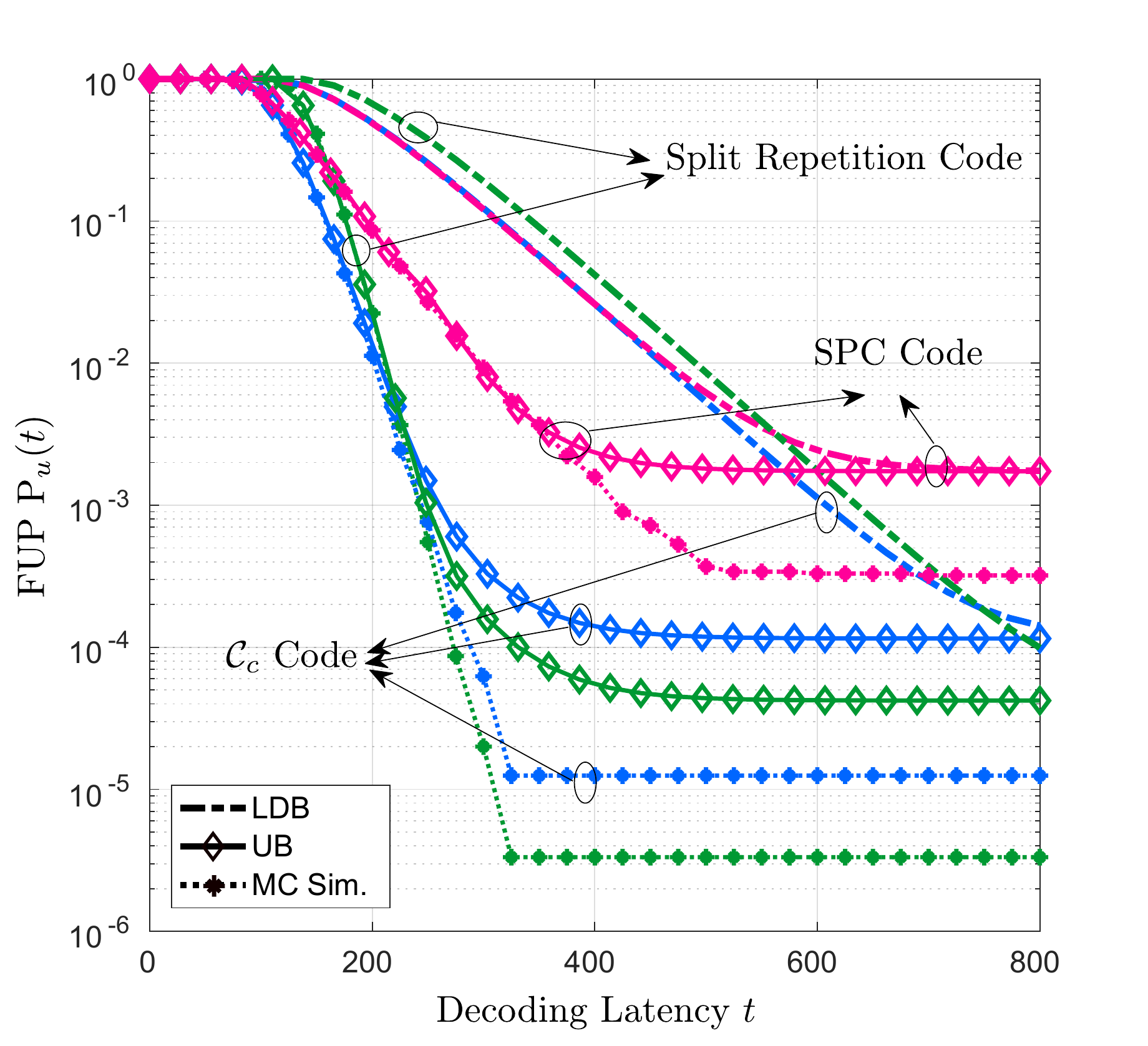}
		\caption{\footnotesize{Split repetition code, SPC code and $\mathcal{C}_c$ code.}}
		\label{fig:04}
	\end{subfigure}%
\caption{\footnotesize{Decoding latency versus FUP for $(L=504,N=8,1/\mu_1=50, \mu_2=20,a=0.1,\delta=0.01,r=0.5):$ (a) LDB, UB and Exact FUP for the {\color{black} parallel}, single-server, and repetition coding; (b) LDB, UB and Monte Carlo simulation (``MC Sim.'') results for split repetition code, SPC code, and the NFV code $\mathcal{C}_c$ defined in \eqref{generatorGc}. }}\label{fig:6ab}
\end{figure*}
 
Fig.~\ref{fig:01} shows decoding latency versus FUP for the LDB in Theorem \ref{ThmDevi}, the UB in Theorem \ref{ErrorBoundThm},
{\color{black}  and the exact error \eqref{ExactSingle}, \eqref{ExactRepetition}, \eqref{ExactUncoded}, for the first three schemes (\emph{i})-(\emph{iii}), and Fig.~\ref{fig:02} shows the LDB in Theorem \ref{ThmDevi}, the UB in Theorem \ref{ErrorBoundThm}, as well as Monte Carlo simulation results for schemes  (\emph{iv}), (\emph{v}), and (\emph{vi}). Here, we assume that the latency contribution that, is independent of the workload, is negligible, i.e., $1/\mu_1=0$. }
We also set $a=1$ and $\mu_2=10$. As a first observation, Fig.~\ref{fig:5ab} confirms that the UB bound is tighter than the LDB.

 Leveraging multiple servers in parallel for decoding is seen to yield significant gains in terms of the trade-off between latency and FUP as argued also in \cite{Rodriguez17} by using experimental results.
In particular, the {\color{black} parallel} scheme is observed to be preferred for lower latencies. This is due to the shorter blocklength $n$, which entails a smaller average decoding latency. However, the error floor of the {\color{black} parallel} scheme is large due to the higher error probability for short blocklengths. In this case, other forms of NFV coding are beneficial. To elaborate, repetition coding requires a larger latency in order to obtain acceptable FUP performance owing to the larger blocklength $n$, but it achieves a significantly lower error floor. For intermediate latencies, the SPC code, and at larger latencies also both the NFV code $\mathcal{C}_c$,
{\color{black} and the split repetition code} provide a lower FUP. This demonstrates the effectiveness of NFV encoding in obtaining a desirable trade-off between latency and FUP. 

In order to validate the conclusion obtained using the bounds, Fig.~\ref{fig:5ab} also shows the exact FUP for the schemes (\emph{i})-(\emph{iii}), as well as Monte Carlo simulation results for schemes (\emph{iv})-(\emph{vi}), respectively. While the absolute numerical values of the bounds in Fig.~\ref{fig:01} and \ref{fig:02} are not uniformly tight with respect to the actual performance, the relative performance of the coding schemes are well matched by the analytical bounds. This provides evidence of the usefulness of the derived bounds as a tool for code design in NFV systems.

 Fig.~\ref{fig:6ab} is obtained in the same way as Fig.~\ref{fig:5ab}, except for the parameters $\mu_1=0.02$, $\mu_2=20$, and $a=0.1$. 
 {\color{black} Unlike Fig.~\ref{fig:5ab}, here latency may be dominated by effects that are independent of the blocklength $n$ since we have $1/\mu_1>0$.}
 The key difference with respect to Fig.~\ref{fig:5ab} is that, for this choice of parameters,
 	repetition coding tends to outperform both the {\color{black} parallel} case, and the
 	NFV code $\mathcal{C}_c$, apart from very small latencies. 
 	This is because repetition coding has the maximum resilience to the unavailability of the servers, while not being excessively penalized by the larger blocklength $n$. This is not the case, however, for very small latency levels, where the NFV code $\mathcal{C}_c$ provides the smallest
 	FUP given its shorter blocklength as compared to repetition
 	coding and its larger $d_{\min}$, with respect to the {\color{black} parallel} scheme. 
   \begin{figure}[t!]
    	\begin{center}
    	\includegraphics[scale=0.49]{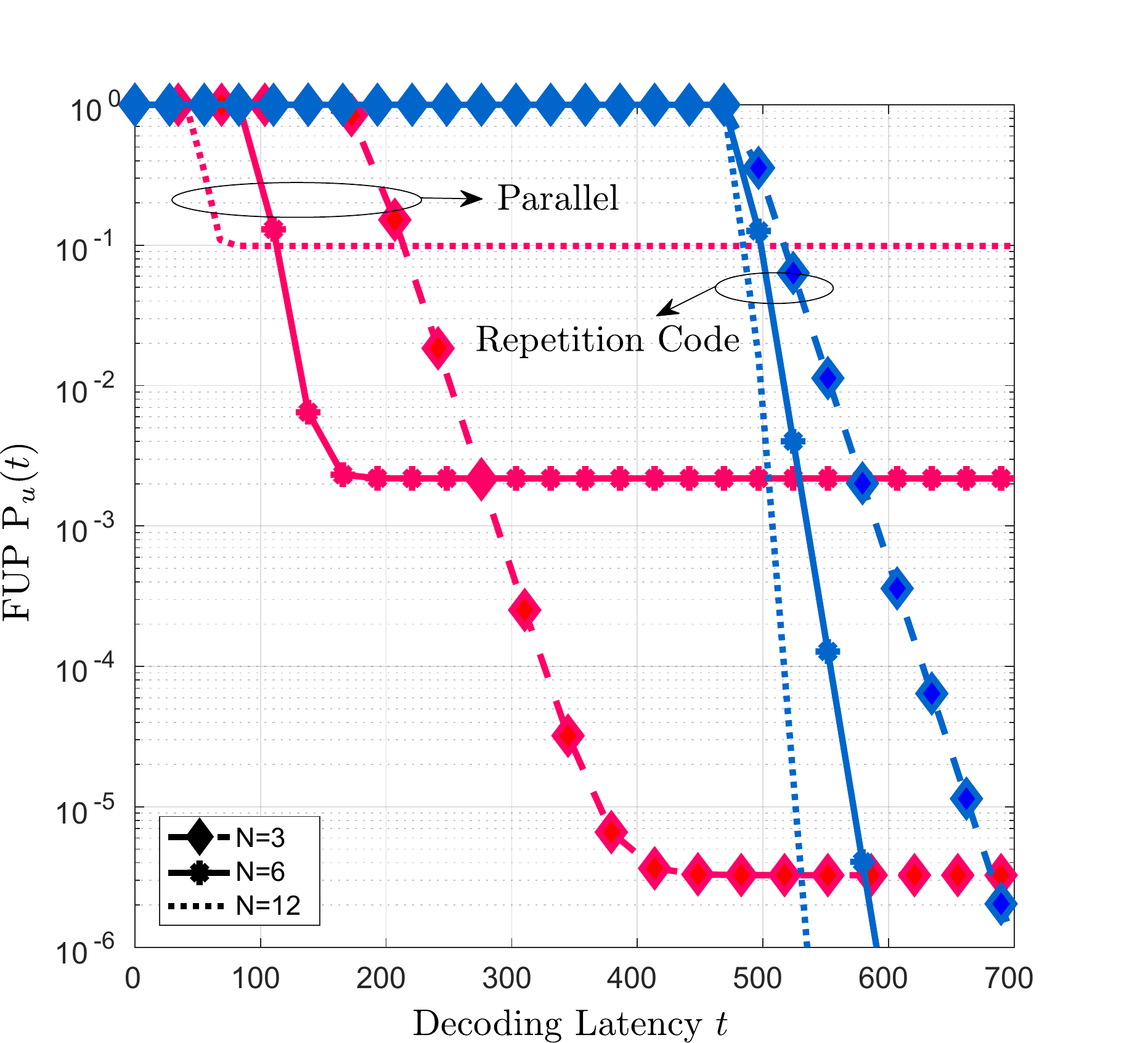} ~\caption{\footnotesize{Decoding latency versus exact FUP for parallel and repetition coding for different number of servers $N\in \{3,6,12\}$ and $(L=240,1/\mu_1=0, \mu_2=10,a=1,\delta=0.03,r=0.5)$ }
    	}~\label{figNserver}
    	\end{center}
    \end{figure}


{\color{black} Fig.~\ref{figNserver} shows the exact FUP for the extreme cases of parallel and repetition coding for different number of servers $N\in\{3,6,12\}$. The figure confirms that, for both schemes, the latency decreases for a larger number of  servers $N$. However, by increasing $N$, the error floor of the parallel scheme grows due to the higher channel error probability for shorter block lengths.
 }

\begin{figure*}[t!]
	\centering
	\begin{subfigure}{.5\textwidth}
		\centering
		\includegraphics[scale=0.5]{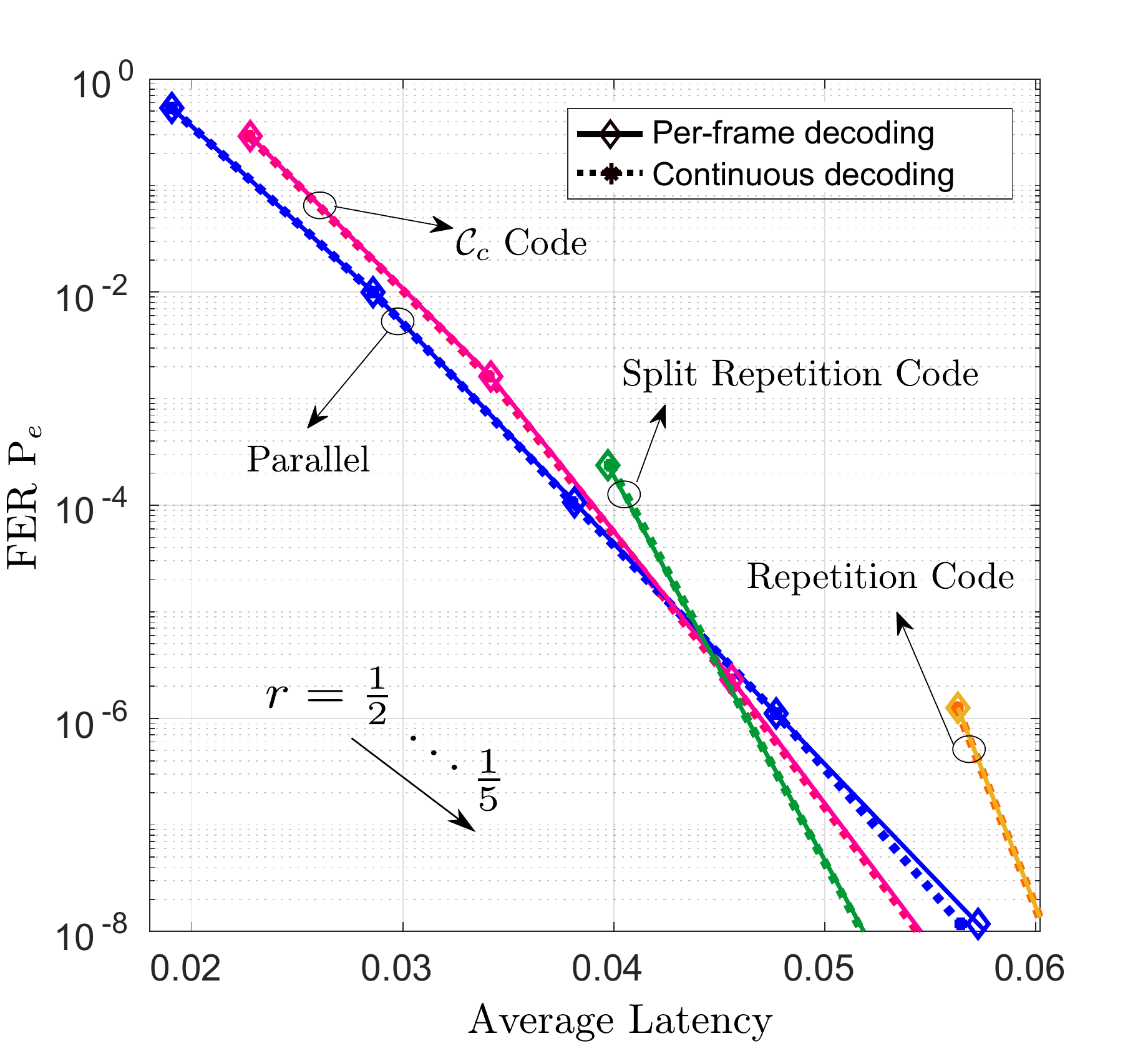}
		\caption{\footnotesize{Lightly loaded system, $\lambda=0.1, \mu =500$. }}
		\label{fig1queue}
	\end{subfigure}%
	\begin{subfigure}{.5\textwidth}
		\centering
		\includegraphics[scale=0.5]{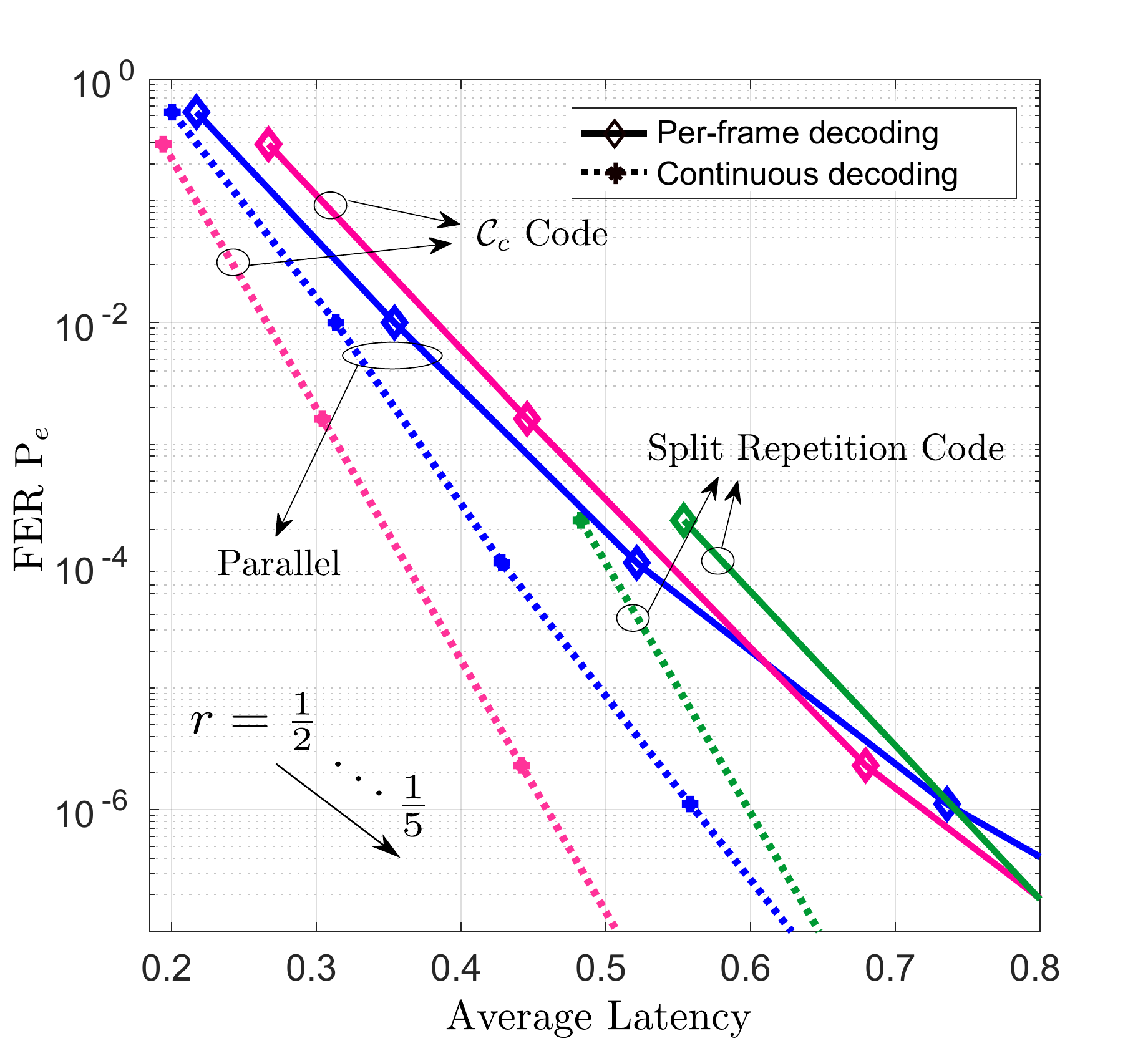}
		\caption{\footnotesize{Heavily loaded system, $\lambda=1, \mu =50$. }}
		\label{fig2queue}
	\end{subfigure}%
	\caption{ \footnotesize{Average latency versus FER with different values of the user code rate $r$ and for different coding schemes when the system is (a) lightly loaded and (b) heavily loaded, respectively ($L=112, N=8, \delta=0.03 $). }}
\end{figure*}
 
\subsection{Random Frame Transmission}\label{simQueue}
We now consider the queueing system described in Section \ref{SecQueue}, and present numerical results that provide insights into the performance of both per-frame and continuous decoding in terms of FER versus average latency \eqref{queueFERR}. {\color{black} As above, the decoding error probability is upper bounded by using \cite[Theorem 33]{polyanskiy}.} Both FER and average latency are a function of the user code rate $r$. We hence vary $r\in\{1/2,\ldots,1/5\}$ to parametrize a trade-off curve between FER and latency. We assume a frame length of $L=112$ bits with $N=8$  servers, and adopt the same user code $\mathcal{C}_c$ as in the previous subsection. The average delay $T_{\text{pfd}}$ is computed from \eqref{ResponseTimeUB}, and $T_{\text{cd}}$ is obtained via Monte Carlo simulations. 

Figs. \ref{fig1queue} and \ref{fig2queue} compare the performance of repetition coding, the NFV code $\mathcal{C}_c$ with the generator matrix \eqref{generatorGc}, and the {\color{black} parallel} approach as defined above.   
Fig.~\ref{fig1queue} considers a lightly loaded system with $\lambda=0.1$ frames per second and  $\mu  = 500$ frames per second, while Fig.~\ref{fig2queue} shows a highly loaded system with both $\lambda=1$ frames per second and $\mu = 50$ frames per second.

First, by comparing the two figures we observe that  per-frame decoding and continuous decoding have a similar performance when the system is lightly loaded (see Fig.~\ref{fig1queue}), while continuous decoding yields a smaller average latency than per-frame decoding when the system is heavily loaded (see Fig.~\ref{fig2queue}). This is because, in the former case, it is likely that a frame is decoded successfully  before the next one arrives. This is in contrast to heavily loaded systems in which the average latency becomes dominated by queuing delays.
We also note that, for repetition coding, the performance of per-frame decoding and continuous decoding coincides in both lightly or heavily loaded systems, {\color{black} since } decoding is complete as soon as one server decodes successfully.

{\color{black} Also, by} comparing the performance of different codes, we recover some of the main insights obtained from the study of the isolated frame transmission. In particular, the {\color{black} parallel} approach outperforms all other schemes for low average delays due to its shorter block length $n$. In contrast, repetition coding outperforms all other schemes in FER for large average delay because of its large block length $n$ and consequently low probability of decoding error (not shown). Furthermore, we observe that split repetition coding is to be preferred for small values of FER.

{\color{black} Finally, Fig.~\ref{fig111queue} demonstrates the behavior of the average latency as the arrival rate $\lambda$ increases and the system becomes more heavily loaded. 
We observe that, for a lightly loaded system, 
the latencies of per frame and continuous decoding are similar, while continuous decoding is preferable for a large number of $\lambda$. This is because
per-frame decoding requires
all servers to wait until at least $N-d_{\min}+1$ servers have completed decoding of their respective packets before moving on to the next frame.}

  \begin{figure}[t!]
   	\begin{center}
   	\includegraphics[scale=0.5]{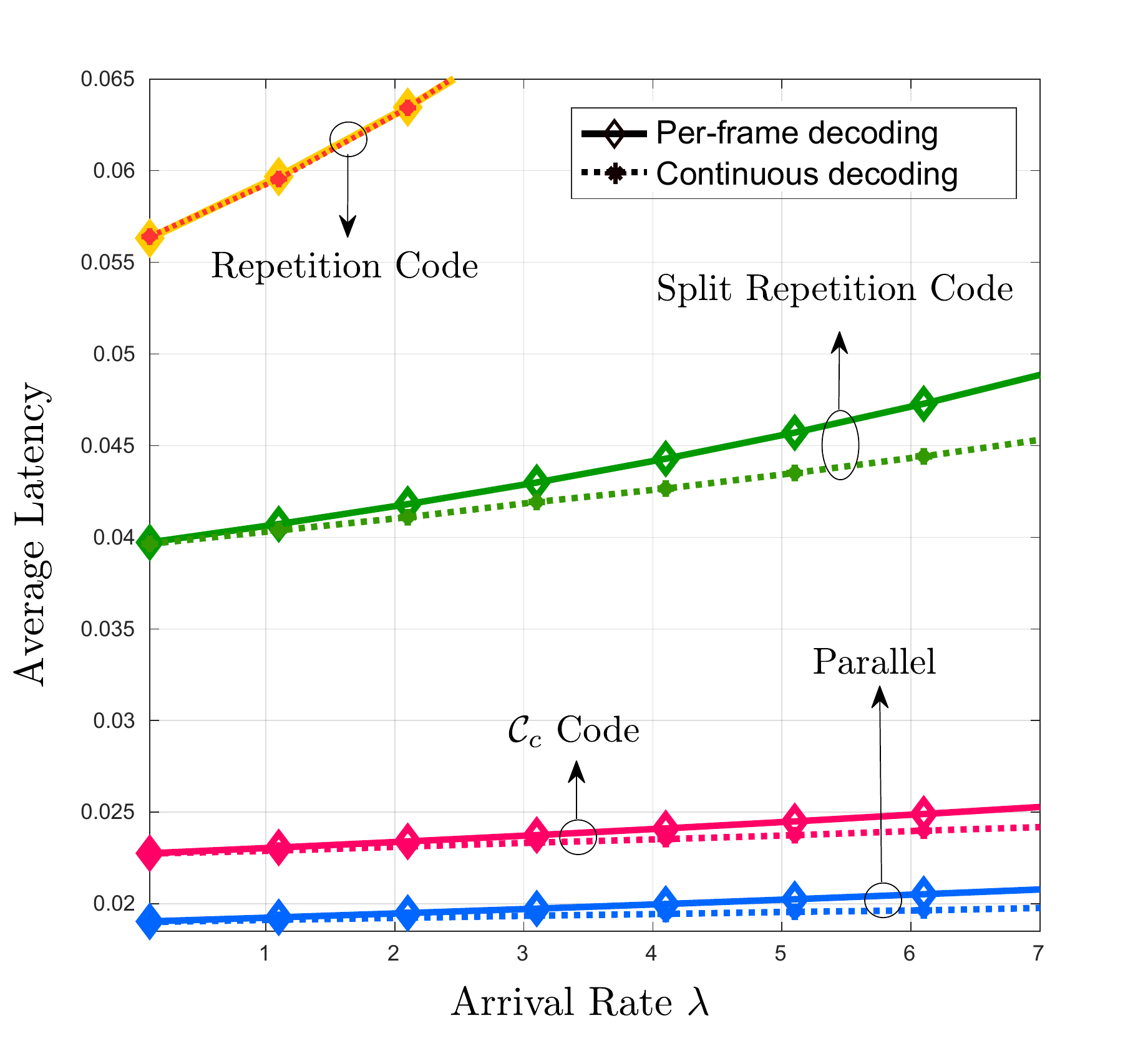}\vspace{-.1cm}~\caption{\footnotesize{ Average latency versus arrival rate $\lambda$ ($L=112, N=8, r=0.5, \mu=500 $). }
   	}~\label{fig111queue}
   	\end{center}\vspace{-1cm}
   \end{figure}
\section{Conclusions}\label{secConclusion}
 In this paper, we analyzed the performance of a novel coded NFV approach for the uplink of a C-RAN system in which decoding takes place at a multi-server cloud processor. The approach is based on the linear combination of the received packets prior to their distribution to the servers or cores, and on the exploitation of the algebraic properties of linear channel codes. The method can be thought of as an application of the emerging principle of coded computing to NFV. 
 {\color{black}In addition, we obtain novel upper bounds on the FUP as a function of the decoding latency based on evaluating tail probabilities for Bernoulli dependent rvs. By extending the analysis from isolated frame transmission to random frame arrival times, the trade-off between average decoding latency and FER for two different policies are derived.}
  Analysis and simulation results demonstrate the {\color{black} benefits} that linear coding of received packets, or NFV coding, can yield in terms of trade-off between decoding latency and reliability. 
 {\color{black} In particular, a prescribed decoding latency or reliability can be obtained by selecting an NFV code with a specific minimum distance and chromatic number, where the two extremes are {\color{black} parallel} NFV-based processing and repetition coding. The former scheme obtains the smallest latency but the lowest reliability, whereas the latter scheme yields the largest latency, but the highest reliability. All other linear NFV codes operate between these two extreme cases.
  
   Among interesting open problems, we mention the design of optimal NFV codes and the extension of the principle of NFV coding to other channels. Note that the approach proposed here applies directly to other additive noise channels in which the user code is an additive group. A key example is the additive Gaussian channel with lattice codes at the user, which will be studied in future work.  }

 \balance

\bibliographystyle{IEEEtran} 
\bibliography{IEEEabrv,references} 

\end{document}